\documentclass[12pt, draftclsnofoot, onecolumn]{IEEEtran}

\usepackage{setspace}
\usepackage{amsmath}
\usepackage{amsfonts}
\usepackage{graphicx}
\usepackage{amsthm}
\usepackage{amssymb}
\usepackage{mathrsfs}
\usepackage{stfloats}
\usepackage{geometry}
\usepackage{color}
\usepackage{bm}

\newtheorem{lemma}{Lemma}
\newtheorem{theorem}{Theorem}

\geometry{left=1.87cm,right=1.87cm,top=2.7cm,bottom=2.7cm}

\begin{document}

\title{Secure Massive MIMO Communication with Low-resolution DACs}


\author{\small
\IEEEauthorblockN{
Jindan Xu$^1$,~\emph{Student Member,~IEEE},
Wei Xu$^1$,~\emph{Senior Member,~IEEE},
Jun Zhu$^2$,~\emph{Member,~IEEE}}, \\
Derrick Wing Kwan Ng$^3$,~\emph{Senior Member,~IEEE},
and A. Lee Swindlehurst$^4$,~\emph{Fellow,~IEEE}
\\
\IEEEauthorblockA{
$^1$National Mobile Communications Research Laboratory, Southeast University, Nanjing 210096, China\\
$^2$Qualcomm Inc. 5775 Morehouse Drive, San Diego, CA 92121, USA\\
$^3$School of Electrical Engineering and Telecommunications, University of New South Wales, NSW 2052, Australia\\
$^4$Center for Pervasive Communications and Computing, University of California, Irvine, CA 92697, USA\\
Email: \{jdxu, wxu\}@seu.edu.cn, junzhu@qti.qualcomm.com, w.k.ng@unsw.edu.au, swindle@uci.edu}
}

\maketitle

\begin{abstract}
In this paper, we investigate secure transmission in a massive multiple-input multiple-output (MIMO) system adopting low-resolution digital-to-analog converters (DACs).
Artificial noise (AN) is deliberately transmitted simultaneously with the confidential signals to degrade the eavesdropper's channel quality.
By applying the Bussgang theorem, a DAC quantization model is developed which facilitates the analysis of the asymptotic achievable secrecy rate.
Interestingly, for a fixed power allocation factor $\phi$, low-resolution DACs typically result in a secrecy rate loss, but in certain cases they provide superior performance, e.g., at low signal-to-noise ratio (SNR).
Specifically, we derive a closed-form SNR threshold which determines whether low-resolution or high-resolution DACs are preferable for improving the secrecy rate.
Furthermore, a closed-form expression for the optimal $\phi$ is derived.
With AN generated in the null-space of the user channel and the optimal $\phi$, low-resolution DACs inevitably cause secrecy rate loss.
On the other hand, for random AN with the optimal $\phi$, the secrecy rate is hardly affected by the DAC resolution because the negative impact of the quantization noise can be compensated for by reducing the AN power.
All the derived analytical results are verified by numerical simulations.



\end{abstract}

\begin{IEEEkeywords}
Physical layer security, massive multiple-input multiple-output (MIMO), digital-to-analog converter (DAC), artificial noise (AN)
\end{IEEEkeywords}

\IEEEpeerreviewmaketitle

\section{Introduction}
Secrecy plays an important role in wireless communications since it is difficult for a broadcast channel to shield transmit signals from unintended recipients.
Traditionally, secure transmission relies on key-based cryptographic methods implemented at the network and application layers \cite{Introduction}.
However, these cryptographic measures are based on the assumption that it is computationally infeasible for the encrypted message to be deciphered within a reasonable amount of time.
Consequently, they inevitably become more vulnerable as the computational capability of the adversary grows.
In the past decade, physical layer security, as a complement to existing cryptographic methods, has gained increasing attention \cite{Sec_01}-\cite{Sec_WPCN}.
With appropriate designs, physical layer techniques enable secure communication over a wireless medium without the help of encryption keys \cite{Sec_1}-\cite{Sec_3}.
In addition, they can be used to augment already existing security measures at higher layers, leading to a multilayer secure transmission \cite{Survey}.

The classical three-terminal security model, known as the wiretap channel, was originally proposed in \cite{Secure_0}, consisting of a transmitter (Alice), an intended receiver (Bob), and an unauthorized receiver (Eve) referred to as an eavesdropper.
This concept has been extended to multi-antenna networks \cite{Sec_MIMO_1}, \cite{Sec_Yu}, while beamforming techniques have been utilized in multiple-input multiple-output (MIMO) systems to improve secrecy \cite{Sec_MIMO_2}.
When the instantaneous channel state information (CSI) of the eavesdropper is known at the transmitter, it has been demonstrated in \cite{Sec_MIMO_GSVD} that the generalized singular value decomposition (GSVD) precoding scheme can achieve the secrecy capacity in the high signal-to-noise ratio (SNR) limit.
The study in \cite{Sec_Korner} showed that secret communication is possible if the eavesdropper's channel is more noisy than the user channel.
When the eavesdropper happens to have a better channel than the legitimate user (e.g., if the eavesdropper is much closer to the transmitter), artificial noise (AN) has been proposed in \cite{Sec_MIMO_AN} and \cite{Sec_MIMO_AN_3} to help degrade the channel quality of the eavesdropper.
The AN is usually designed to be orthogonal to the channel of the intended receivers, thus causing no additional interference to the legitimate users \cite{Sec_MIMO_AN_1}, \cite{Sec_MIMO_AN_2}.
In order to further combat the uncertainty of channel information at the transmitter, robust beamforming design for physical layer security with the aid of AN has been studied in \cite{Sec_BDMA}.

Recently, massive MIMO has become a candidate technology for next-generation wireless communication systems \cite{MIMO1}-\cite{5G} and its application to guarantee communication security has attracted significant attention.
In massive MIMO, hundreds, or even thousands, of antennas are equipped at the base station (BS) \cite{MIMO3}-\cite{MIMO_Liu} and the corresponding spatial-wideband effect has been studied in \cite{MIMO_SWB}.
For instance, downlink secure transmission at the physical layer in a multi-cell MIMO network has been investigated in \cite{Sec_mMIMO_1} and the impact of a massive MIMO relay on secrecy has been studied in \cite{Sec_mMIMO_2}.
The authors in \cite{Secure_Zhu_1} have derived two tight lower bounds for the ergodic secrecy rate considering a maximal-ratio-combining (MRC) precoder.
In order to strike a balance between complexity and performance, linear precoders based on matrix polynomials have been proposed in \cite{Secure_Zhu_2} and a phase-only zero-forcing (ZF) AN scheme has been presented in \cite{Secure_phase}.
The authors in \cite{Secure_CT} proposed a pilot-based channel training scheme for a full-duplex receiver to enhance the physical layer security.
As demonstrated in \cite{Secure_train}, AN can also be injected into the downlink training signals to prevent the eavesdropper from obtaining accurate CSI for the eavesdropping link.

Despite the promising performance gain brought by massive MIMO, it suffers from a challenging issue of high cost and power consumption due to the fact that each antenna requires a separate radio-frequency (RF) chain for signal processing.
One potential approach to reducing the required cost and power is to use digital-to-analog converters (DACs) with lower resolution for downlink transmissions \cite{DAC2}.
A number of authors have considered various direct nonlinear precoding schemes that constrain the transmit signals to match the DAC resolution.
For example, a novel precoding technique using 1-bit DACs has been presented in \cite{DAC_PSK} and a nonlinear beamforming algorithm has been proposed in \cite{DAC_pokemon}.
Also, perturbation methods minimizing the probability of error at the receivers have been studied in \cite{DAC_pert}.
An alternative simpler approach is to quantize the output of standard linear precoders, which is referred to as quantized linear precoding  \cite{DAC1}-\cite{DAC_downlink}.
Although it is generally difficult to analytically characterize the performance degradation due to nonlinear quantization, the well-known Bussgang theorem can be applied to develop an approximate linear model \cite{Bus1}, \cite{Bus2} .
This model decomposes the quantized signal into a linearly distorted version of the signal together with an uncorrelated quantization noise source \cite{J_Xu_2}.
It is noteworthy that the DAC quantization noise shares some similarities with the AN injected by the BS as both are transmitted along with the information-carrying signals and produce interference at the eavesdropper.
In other words, the DAC quantization noise can be regarded, in some sense, as a special type of AN.
Hence, it can also decrease the received signal-to-interference-and-noise ratio (SINR) at the eavesdropper, while unavoidably interfering with legitimate users at the same time.
While common sense dictates that low-resolution DAC quantization degrades system performance in conventional massive MIMO systems, it is interesting to consider the possibility that DAC quantization could enhance secrecy capacity in some scenarios.
To the best of our knowledge, only few of the existing works (e.g., \cite{Secure_0}-\cite{Sec_BDMA}, \cite{Sec_mMIMO_1}-\cite{Secure_train}) have investigated secure massive MIMO communications using low-resolution DACs.

On the other hand, although the effect of hardware impairments on secure massive MIMO systems has been analyzed in \cite{Secure_Zhu_3}, only ideal converters with infinite resolution were considered.
In this paper, we investigate secure transmission in a multiuser massive MIMO downlink network equipped with low-resolution DACs at the BS.
We assume that there exists a multi-antenna eavesdropper that intends to eavesdrop the information transmitted from the BS to multiple legitimate users.
The eavesdropper is passive in order to conceal its presence.
We assume for simplicity that perfect CSI is available at the BS since there are already a number of studies, i.e., \cite{CE_1}-\cite{CE_3}, focusing on the problem of channel estimation.
We consider two popular AN methods for injecting AN at the BS in order to prevent the unintended receiver from eavesdropping.
One method is based on AN which lies in the null-space spanned by the channels of all the desired users, while the other assumes random AN.
We also study the impact of low-resolution DACs on the achievable secrecy rate.
The main contributions of this work are summarized as follows:

1)
For the case of low-resolution DAC quantization in secure massive MIMO, we derive tight lower bounds for the secrecy rate of the system using different types of AN methods.
We observe that lower-resolution DACs provide superior secrecy performance under certain circumstances, e.g., at low SNR.
This is explained by the fact that the quantization noise degrades the eavesdropper's capacity more significantly than that of the users.
Specifically, we derive a closed-from expression for a threshold SNR $\bar{\gamma}_0$, such that if the transmit SNR $\gamma_0$ satisfies $\gamma_0<\bar{\gamma}_0$, lower-resolution DACs enhance the secrecy rate, while if $\gamma_0>\bar{\gamma}_0$, higher-resolution DACs are preferred.

2)
It is found that secure transmission with low-resolution DACs depends heavily on the power allocation factor $\phi\in(0,1]$, which denotes the proportion of power used for confidential signals, with the remainder of the power allocated for AN.
Generally, the secrecy rate first increases with $\phi$ but then subsequently decreases.
A closed-form expression for an approximate optimal $\phi^*$ is obtained. 
We observe that $\phi^*$ increases with a decreasing DAC resolution. This suggests that less power can be utilized to generate AN for DACs with a lower resolution.

3)
For the null-space AN method with the optimal $\phi^*$, we observe that low-resolution DACs lead to secrecy rate loss for all SNR values. 
On the other hand, for the random AN method, the secrecy rate with $\phi^*$ is insensitive to the DAC resolution.
This is because the DAC quantization noise behaves the same as random AN at both the intended user and eavesdropper.
As the quantization noise increases, we can maintain the same secrecy rate by reducing the power of the random AN with an increasing $\phi$.

4)
If extremely low-resolution DACs, i.e., 1-bit DACs, are employed at the BS, the advantage of null-space AN over random AN becomes marginal, while the null-space AN also suffers from a much higher computational complexity especially in massive MIMO.
In this scenario, the null-space AN method is not cost-efficient and random AN is preferred.

\begin{figure*}[tb]
\centering\includegraphics[width=0.8\textwidth,bb=50 70 650 360]{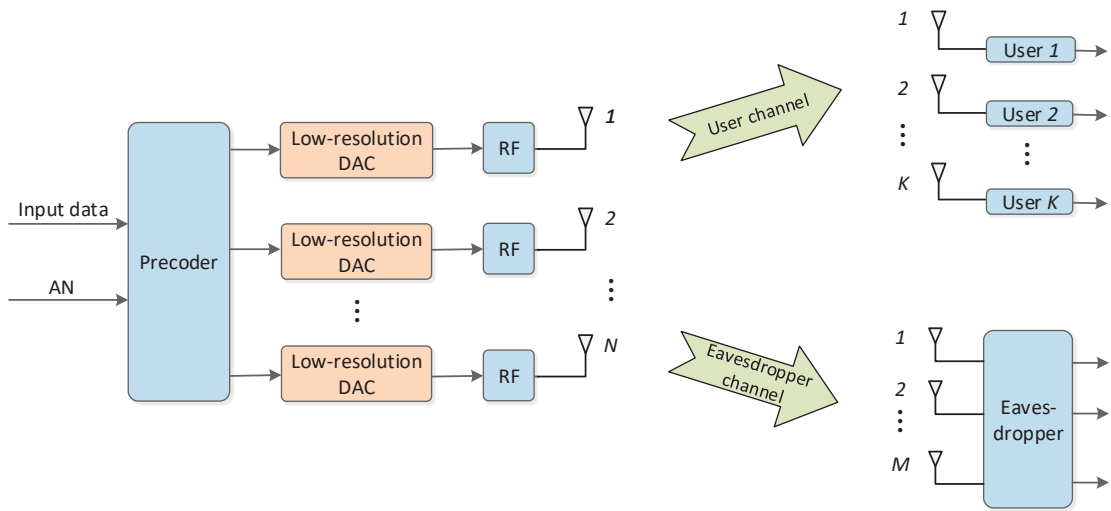}
\caption{Block diagram of the secure multiuser massive MIMO system.}
\label{block}
\end{figure*}

The rest of this paper is structured as follows.
The DAC quantization model, channel model, and two AN design methods are introduced in Section~\uppercase\expandafter{\romannumeral2}.
We derive a tight lower bound for the achievable secrecy rate in Section~\uppercase\expandafter{\romannumeral3} assuming low-resolution DACs.
Section~\uppercase\expandafter{\romannumeral4} analyzes the effect of various system parameters on secure communication.
Simulation results are presented in Section \uppercase\expandafter{\romannumeral5}, and conclusions are drawn in Section~\uppercase\expandafter{\romannumeral6}.

\emph{Notation}:
$\mathbf{A}^T$, $\mathbf{A}^*$, and $\mathbf{A}^H$ represent the transpose, conjugate, and conjugate transpose of $\mathbf{A}$, respectively.
$\mathbf{a}\sim \mathcal{CN}(\mathbf{0},\mathbf{\Sigma})$ denotes a circularly symmetric complex Gaussian vector with zero mean and covariance matrix $\mathbf{\Sigma}$.
$\textrm{tr}\{\mathbf{A}\}$ denotes the trace of $\mathbf{A}$ and $\textrm{diag}(\mathbf{A})$ is a matrix that retains only the diagonal entries of $\mathbf{A}$.
$\mathbb{E}\{\cdot\}$ is the expectation operator.
$\|\mathbf\cdot\|^2$ denotes the Euclidean norm.
$\xrightarrow{a.s.}$ denotes almost sure convergence.
$[x]^+=\mathrm{max} \{0,x\}$ chooses the maximum between $0$ and $x$.

\section{System Model}

In this section, we investigate a multiuser massive MIMO security network employing low-resolution DACs. The DAC quantization model and two AN design methods are introduced.

\subsection{Quantization Model for Low-resolution DACs}

It is in general difficult to accurately characterize the quantization error of an arbitrary low-resolution DAC. Fortunately, an equivalent linear representation has been widely adopted by using the Bussgang theorem \cite{Bus1}.
This model has been verified to be accurate enough for most DAC quantization levels in practice \cite{J_Xu_1}.
In this model, the quantized data is decomposed into two uncorrelated parts as
\begin{equation}
\label{DAC}
\mathcal{Q}_{\mathrm{DA}}(\mathbf{x})=\mathbf{F}\mathbf{x}+\mathbf{n}_{\mathrm{DA}},
\end{equation}
where $\mathcal{Q}_{\mathrm{DA}}(\cdot)$ denotes the quantization operation, $\mathbf{x}$ denotes the input data vector to the DAC, $\mathbf{F}$ represents the equivalent linear transformation matrix, and $\mathbf{n}_{\mathrm{DA}}\sim \mathcal{CN}(\mathbf{0},\mathbf{C}_{\mathrm{DA}})$ denotes the Gaussian quantization noise.
It was shown in \cite{J_Xu_1} that
\begin{equation}
\label{F}
\mathbf{F}=\sqrt{1-\rho}~\mathbf{I},
\end{equation}
and
\begin{equation}
\label{Cor_DA_0}
\mathbf{C}_{\mathrm{DA}}=\rho~\mathbb{E}\left\{\textrm{diag}\left(\mathbf{x}\mathbf{x}^H\right)\right\},
\end{equation}
where $\rho\in(0,1)$ is a distortion factor that depends on the DAC resolution $b_{\mathrm{DA}}$, which represents the number of quantized bits for the DAC.

\subsection{Secure Massive MIMO Transmission}

In the considered massive MIMO downlink network as illustrated in Fig. \ref{block}, $K$ single-antenna users are served by an $N$-antenna BS, where each transmit antenna employs a pair of low-resolution DACs for processing the in-phase and quadrature signals.
Meanwhile, a passive eavesdropper equipped with $M$ antennas strives to eavesdrop the information sent to the users.
In order to protect the confidential data from eavesdropping, the BS injects AN into the information-bearing signals.
Before transmission, the signal vector $\mathbf{s}\in \mathbb{C}^{K\times 1}$ with $\mathbb{E}\{\mathbf{s}\mathbf{s}^{H}\}=\mathbf{I}_{K}$ is precoded by a matrix $\mathbf{W}\in \mathbb{C}^{N\times K}$ with $\textrm{tr}\{\mathbf{W}\mathbf{W}^H\}=K$,
while the AN vector $\mathbf{z}\sim \mathcal{CN}(\mathbf{0},\mathbf{I}_{N-K})$ is multiplied by an AN shaping matrix $\mathbf{V}\in \mathbb{C}^{N\times (N-K)}$ with $\textrm{tr}\{\mathbf{V}\mathbf{V}^H\}=N-K$.
The weighted data vector at the BS before transmission is expressed as
\begin{align}
\mathbf{x}&=\sqrt{\frac{\phi P}{K}}\mathbf{W}\mathbf{s} + \sqrt{\frac{(1-\phi) P}{N-K}}\mathbf{V}\mathbf{z}
\triangleq\sqrt{p}\mathbf{W}\mathbf{s} + \sqrt{q}\mathbf{V}\mathbf{z}
,
\label{x}
\end{align}
where $P$ denotes the total transmit power and $\phi\in(0,1]$ is a power allocation factor.
For notational simplicity, we define
\begin{align}
p\triangleq \frac{\phi P}{K}
\label{p}
\end{align}
and
\begin{align}
q\triangleq \frac{(1-\phi) P}{N-K}.
\label{q}
\end{align}

Applying the quantization model in \eqref{DAC}, the transmit vector after DAC quantization is given by
\begin{equation}
\label{xq}
\mathbf{x}_{\mathrm{q}}=\mathcal{Q}_{\mathrm{DA}}(\mathbf{x})
=\sqrt{1-\rho}~\mathbf{x}+\mathbf{n}_{\mathrm{DA}},
\end{equation}
where $\mathbf{n}_{\mathrm{DA}}\sim \mathcal{CN}(\mathbf{0},\mathbf{C}_{\mathrm{DA}})$ represents the quantization noise which is uncorrelated with $\mathbf{x}$.
By substituting \eqref{x} into \eqref{Cor_DA_0},  the quantization noise covariance matrix $\mathbf{C}_{\mathrm{DA}}$ is obtained as
\begin{equation}
\begin{aligned}
\label{cor_DA}
\mathbf{C}_{\mathrm{DA}}=\rho\Big[ p~\textrm{diag}\left(\mathbf{W}\mathbf{W}^H\right) +q~\textrm{diag}\left(\mathbf{V}\mathbf{V}^H\right)\Big].
\end{aligned}
\end{equation}

Then, from \eqref{x} and \eqref{xq}, the received vector at the $K$ users can be expressed as
\begin{align}
\mathbf{y}&=\mathbf{H}\mathbf{x}_{\mathrm{q}}+\mathbf{n}
=\sqrt{1-\rho}\Big(\sqrt{p}\mathbf{H}\mathbf{W}\mathbf{s} + \sqrt{q}\mathbf{H}\mathbf{V}\mathbf{z}\Big)+\mathbf{H}\mathbf{n}_{\mathrm{DA}}+\mathbf{n}
,
\label{y}
\end{align}
where $\mathbf{n}\sim \mathcal{CN}(\mathbf{0},\sigma_n^2\mathbf{I}_{K})$ represents the thermal additive white Gaussian noise (AWGN) at the users,
and $\mathbf{H}\in \mathbb{C}^{K\times N}$ denotes the channel matrix between the BS and $K$ users.
In this work, we assume that long-term power control is employed to compensate for the large-scale fading of the different users.
Furthermore, the entries of $\mathbf{H}$ are modeled as independent and identically distributed (i.i.d.) complex Gaussian random variables with zero mean and unit variance.
Similarly, the received vector at the eavesdropper is
\begin{align}
\mathbf{y}_{\mathrm{e}}&=\mathbf{H}_{\mathrm{e}}\mathbf{x}_{\mathrm{q}}+\mathbf{n}_{\mathrm{e}}
=\sqrt{1-\rho}\Big(\sqrt{p}\mathbf{H}_{\mathrm{e}}\mathbf{W}\mathbf{s} + \sqrt{q}\mathbf{H}_{\mathrm{e}}\mathbf{V}\mathbf{z}\Big)+\mathbf{H}_{\mathrm{e}}\mathbf{n}_{\mathrm{DA}}+\mathbf{n}_{\mathrm{e}}
,
\label{ye}
\end{align}
where $\mathbf{n}_{\mathrm{e}}\sim \mathcal{CN}(\mathbf{0},\sigma_{\mathrm{e}}^2\mathbf{I}_{M})$ represents the thermal AWGN at the eavesdropper,
and $\mathbf{H}_{e} \in \mathbb{C}^{M \times N}$ denotes the channel matrix between the BS and the eavesdropper, whose entries are also modeled as i.i.d. complex Gaussian random variables with zero mean and unit variance.
To guarantee secure communication in the worst case, we assume that $\sigma_{\mathrm{e}}^2$ is sufficiently small at the eavesdropper and can be ignored in the sequel \cite{Sec_MIMO_AN_3}, \cite{Secure_Zhu_1}, \cite{Secure_Zhu_2}.

\subsection{AN Design Methods}
In this paper, we consider two common methods to generate the AN shaping matrix $\mathbf{V}$.
Let $\mathbf{v}_i,~\forall~i \in \{1,2,...,N-K\}$, denote the $i$th column of $\mathbf{V}$ satisfying the constraint $\|\mathbf{v}_i\|^2\!=\!1$.

\subsubsection{Null-Space Artificial Noise}
For downlink data transmission, AN is added to the transmit signals at the BS to degrade the decoding ability of the eavesdropper.
However, it can simultaneously interfere with the legitimate users as well.
In order to avoid any potential leakage of the AN to the intended users, the AN is often designed to lie in the null-space of the channel matrix $\mathbf{H}$, i.e., $\mathbf{HV}=\mathbf{0}$, assuming $\mathbf{H}$ is available at the transmitter.
However, taking low-resolution DACs into account, the AN no longer perfectly lies in the channel null-space after quantization and thus additional interference still exists.

\subsubsection{Random Artificial Noise}
For massive MIMO communication, the computational complexity of the null-space of $\mathbf{H}$ becomes prohibitively large with a large dimension $N$.
Therefore, a much simpler but effective method to design $\mathbf{V}$ was introduced in \cite{Secure_Zhu_1}.
In this method, the columns of $\mathbf{V}$ are generated as mutually independent random vectors satisfying $\|\mathbf{v}_i\|^2=1$, $\forall~i \in \{1,2,...,N-K\}$.
The random AN is inevitably leaked to the intended users but it offers much lower computational complexity compared to the null-space based AN.

Note that for both AN design methods, the columns of $\mathbf{V}$ asymptotically form an incomplete orthogonal basis with large $N$ due to the strong law of large numbers \cite{Secure_Zhu_1}.
In the following, we refer to the above two AN design methods by using superscripts, $\mathcal{N}$ and $\mathcal{R}$, respectively.

\section{Achievable Ergodic Secrecy Rate}
Given the expressions of the received signals at both the users and eavesdropper, we derive the achievable secrecy rate per user in this section, under the assumption of large numbers of antennas and users but with fixed ratios given as:
\begin{align}
\alpha \triangleq \frac{M}{N}
\label{alpha}
\end{align}
and
\begin{align}
\beta \triangleq \frac{K}{N},
\label{beta}
\end{align}
where $\beta$ denotes the user loading ratio \cite{J_Xu_2}.
To start, we first recall the following lemma from \cite[Lemma 1]{Secure_Zhu_1}.
\begin{lemma}
\label{lemma_sec_rate}
The achievable ergodic secrecy rate for the $k${\rm th} user is given by
\begin{align}
\label{secure rate}
R_{\mathrm{sec},k}=[R_k-C_k]^+,
\end{align}
where $[x]^+=\mathrm{max}\{0,x\}$, $R_k$ represents the achievable ergodic rate of the $k${\rm th} user, and $C_k$ denotes the ergodic capacity between the BS and the eavesdropper seeking to decode the information of the $k${\rm th} user.
\end{lemma}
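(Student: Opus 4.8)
The plan is to obtain \eqref{secure rate} from the achievability side of the Gaussian MIMO wiretap channel with confidential messages, treating the $k$th data stream $s_k$ as the secret message and everything else --- the competing streams, the AN, and all distortion terms --- as interference seen by the eavesdropper. First I would fix the precoder $\mathbf{W}$, the AN shaping matrix $\mathbf{V}$, and the power split $\phi$, and view the pair consisting of the scalar channel from $s_k$ to the $k$th entry of $\mathbf{y}$ in \eqref{y} and the vector channel from $s_k$ to $\mathbf{y}_{\mathrm e}$ in \eqref{ye} as a single wiretap channel. By Wyner's wiretap coding theorem and its Gaussian vector extension, for any codeword rate $R_k$ reliably decodable by user $k$ and any rate $R_k^{\mathrm e}$ that upper-bounds the information about message $k$ leaked to the eavesdropper, the rate $R_k-R_k^{\mathrm e}$ is achievable with asymptotically perfect secrecy; hence $[R_k-R_k^{\mathrm e}]^+$ is an achievable secrecy rate per user.

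Second, I would identify $R_k$ with the achievable ergodic rate of the $k$th user over \eqref{y}. After the Bussgang decomposition \eqref{xq}, the residual multiuser interference, the leaked AN, the quantization-noise term $\mathbf{H}\mathbf{n}_{\mathrm{DA}}$, and the thermal noise $\mathbf{n}$ are all uncorrelated with the desired symbol $s_k$, so the standard worst-case-uncorrelated-noise bound yields a Gaussian-input achievable rate; this is precisely the quantity $R_k$ whose closed form is derived later in Section~\uppercase\expandafter{\romannumeral3}. Third, for the leakage term I would invoke the conservative eavesdropper model: grant the eavesdropper genie-aided knowledge of the competing streams $\{s_j\}_{j\neq k}$ and of the AN vector $\mathbf{z}$, so that it may cancel both before decoding $s_k$. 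Side information can only increase mutual information, so the capacity $C_k$ of the resulting genie-aided channel satisfies $C_k\ge I(s_k;\mathbf{y}_{\mathrm e})$, and the choice $R_k^{\mathrm e}=C_k$ is admissible. Combined with the first step this gives $R_{\mathrm{sec},k}=[R_k-C_k]^+$, which is \eqref{secure rate}; this reproduces the argument of \cite[Lemma~1]{Secure_Zhu_1}, which may be cited directly.

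The step I expect to need the most care is confirming that $C_k$, computed as the capacity of the genie-aided eavesdropper channel, remains a valid upper bound on the true leakage once the signal-dependent DAC quantization noise $\mathbf{n}_{\mathrm{DA}}$ is present and is \emph{not} handed to the genie. Within the adopted Bussgang model $\mathbf{n}_{\mathrm{DA}}$ is Gaussian and uncorrelated with $\mathbf{x}$, so after cancelling $\{s_j\}_{j\neq k}$ and $\mathbf{z}$ the eavesdropper sees a Gaussian MIMO channel whose noise is $\mathbf{H}_{\mathrm e}\mathbf{n}_{\mathrm{DA}}$ (plus the negligible $\mathbf{n}_{\mathrm e}$); $C_k$ is then its capacity conditioned on the channel realization, averaged over $\mathbf{H}_{\mathrm e}$, which is consistent with the ergodic formulation and with the worst-case assumption $\sigma_{\mathrm e}^2\to 0$ already adopted. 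The explicit evaluation of $R_k$ and $C_k$ is deferred to the subsequent large-system analysis and does not affect the structural identity established here.
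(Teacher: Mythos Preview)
The paper does not prove this lemma at all; it simply recalls it verbatim from \cite[Lemma~1]{Secure_Zhu_1}. Your proposal is therefore more than the paper itself offers: you sketch the standard wiretap-achievability argument and, appropriately, note that one may cite \cite{Secure_Zhu_1} directly. In that sense your plan is consistent with the paper's treatment.

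One inaccuracy worth fixing: in your genie step you hand the eavesdropper both the competing streams $\{s_j\}_{j\neq k}$ \emph{and} the AN vector $\mathbf{z}$. That is not the model the paper adopts. In the subsequent definition of $C_k$ in \eqref{C_k0}--\eqref{X}, the eavesdropper is assumed able to cancel only the inter-user interference; the AN term $(1-\rho)q\,\mathbf{H}_{\mathrm e}\mathbf{V}\mathbf{V}^H\mathbf{H}_{\mathrm e}^H$ remains in $\mathbf{X}$ as interference. Your stronger genie would still yield a valid (looser) achievable secrecy lower bound, since extra side information can only increase the leakage upper bound, but the resulting $C_k$ would not coincide with the quantity the paper actually analyses and bounds in Theorem~\ref{theorem_Ck}. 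To match the paper, restrict the genie to revealing $\{s_j\}_{j\neq k}$ only, so that after cancellation the eavesdropper faces signal $\sqrt{(1-\rho)p}\,\mathbf{H}_{\mathrm e}\mathbf{w}_k s_k$ in noise $\sqrt{(1-\rho)q}\,\mathbf{H}_{\mathrm e}\mathbf{V}\mathbf{z}+\mathbf{H}_{\mathrm e}\mathbf{n}_{\mathrm{DA}}+\mathbf{n}_{\mathrm e}$.
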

In the following, we derive a lower bound for $R_k$ and an upper bound for $C_k$ assuming low-resolution DACs, which then provides us a lower bound for the achievable ergodic secrecy rate.

\subsection{Achievable Ergodic Rate of Each User}

From \eqref{y}, the received signal of user $k$, i.e., $y_k$, can be expressed as
\begin{align}
y_k=\sqrt{1-\rho}\left(\sqrt{p}\mathbf{h}_k^T\mathbf{W}\mathbf{s} + \sqrt{q}\mathbf{h}_k^T\mathbf{V}\mathbf{z}\right)+\mathbf{h}_k^T\mathbf{n}_{\mathrm{DA}}+n_k
,
\label{yk}
\end{align}
where $\mathbf{h}_k^T$ denotes the $k$th row of $\mathbf{H}$ and $n_k$ is the $k$th element of $\mathbf{n}$.
We also express $\mathbf{W}=[\mathbf{w}_1,\mathbf{w}_2,...,\mathbf{w}_k]$ where $\mathbf{w}_k\in \mathbb{C}^{N\times 1},~\forall k \in \{1,2,...,K\}$, is the $k$th column of $\mathbf{W}$.
Then, the signal-to-interference-quantization-and-noise ratio (SIQNR) of the $k$th user, $\gamma_k$, can be expressed as
\begin{align}
\label{SIQNR3}
&\gamma_k=
\frac{\overbrace{(1-\rho)p|\mathbf{h}_{k}^{T}\mathbf{w}_k|^2}^{S_k}}
{
\underbrace{(1\!-\!\rho)p\sum_{j\neq k}|\mathbf{h}_{k}^{T}\mathbf{w}_j|^2}_{I_k}
\!+\!\underbrace{\mathbf{h}_k^T \mathbf{C}_{\mathrm{DA}} \mathbf{h}_k^*}_{Q_k}
\!+\!\underbrace{(1\!-\!\rho)q\mathbf{h}_k^T\mathbf{VV}^H\mathbf{h}_k^*}_{A_k}
\!+\sigma_n^2}
,
\end{align}
where $S_k$ is the power of the desired signal and $I_k$ represents the power of the inter-user interference. Variables $Q_k$ and $A_k$ denote the interference power caused by DAC quantization and AN, respectively.
Then, by imposing the worst-case assumption of Gaussian distributed interference and applying Shannon's formula, a lower bound for the achievable ergodic rate of user $k$ can be evaluated as
\setcounter{equation}{15}
\begin{align}
R_k=\mathbb{E}\Big\{ \log_2\left(1+\gamma_k \right)\Big\}.
\label{Rk}
\end{align}

In order to characterize the user rate performance, we derive the asymptotic behavior of $\gamma_k$ with both AN and DAC quantization in the following lemma.

\begin{lemma}
\label{lemma_SIQNR}
Under the assumption of $N\rightarrow \infty $ with fixed $\alpha$ and $\beta$, the SIQNR of each user almost surely converges to
\begin{align}
\gamma_k^{\mathcal{N}} \xrightarrow{a.s.} \frac{(1-\rho)\left(\frac{1}{\beta}-1\right)\phi\gamma_0}{\rho\gamma_0+1} \triangleq \gamma^{\mathcal{N}},
\label{SIQNR_N}
\end{align}
for null-space AN and
\begin{align}
\gamma_k^{\mathcal{R}} \xrightarrow{a.s.} \frac{(1-\rho)\left(\frac{1}{\beta}-1\right)\phi\gamma_0}{\rho\gamma_0+(1-\rho)(1-\phi)\gamma_0+1} \triangleq \gamma^{\mathcal{R}},
\label{SIQNR_R}
\end{align}
for random AN, where $\gamma_0=\frac{P}{\sigma_n^2}$ represents the average transmit SNR.
\end{lemma}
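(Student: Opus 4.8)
The plan is to substitute the zero-forcing precoder $\mathbf{W}$ and, in turn, each of the two AN shaping matrices $\mathbf{V}$ into the SIQNR expression \eqref{SIQNR3}, and then to compute the almost-sure limit of each of the four pieces $S_k,I_k,Q_k,A_k$ as $N\to\infty$ with $\alpha,\beta$ fixed, using the strong law of large numbers and the standard trace/rank lemmas for matrices with i.i.d.\ Gaussian entries; note that $\alpha$ never enters here, since the eavesdropper plays no role in $\gamma_k$. It is convenient to use that the $j$th column $\mathbf{w}_j$ of the ZF precoder is proportional to $\mathbf{P}_{-j}\mathbf{h}_j^*$, where $\mathbf{P}_{-j}$ is the orthogonal projector onto the null space of the channel matrix with its $j$th row deleted; $\mathbf{P}_{-j}$ is independent of $\mathbf{h}_j$, has rank $N-K+1$, and the normalization gives $\|\mathbf{w}_j\|^2\xrightarrow{a.s.}1$.

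First I would dispose of the two exactly-vanishing terms: since $\mathbf{H}\mathbf{W}$ is diagonal, $\mathbf{h}_k^T\mathbf{w}_j=0$ for $j\neq k$, hence $I_k=0$ for both AN methods; and for null-space AN, $\mathbf{H}\mathbf{V}=\mathbf{0}$ forces $\mathbf{h}_k^T\mathbf{V}=\mathbf{0}$, so $A_k^{\mathcal N}=0$. For the signal term, $S_k=(1-\rho)p\,|\mathbf{h}_k^T\mathbf{w}_k|^2$, and combining $\mathbf{w}_k\propto\mathbf{P}_{-k}\mathbf{h}_k^*$ with $\|\mathbf{w}_k\|^2\to 1$ gives $|\mathbf{h}_k^T\mathbf{w}_k|^2=\mathbf{h}_k^T\mathbf{P}_{-k}\mathbf{h}_k^*$; since $\mathbf{h}_k\perp\mathbf{P}_{-k}$, the trace lemma yields $\tfrac1N\mathbf{h}_k^T\mathbf{P}_{-k}\mathbf{h}_k^*\xrightarrow{a.s.}\tfrac1N\mathrm{tr}(\mathbf{P}_{-k})\to 1-\beta$, so that $S_k\xrightarrow{a.s.}(1-\rho)p\,N(1-\beta)=(1-\rho)(\tfrac1\beta-1)\phi P$ after inserting $p=\phi P/K$. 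The same lemma handles the AN leakage for random AN: as $\mathbf{V}$ is generated independently of $\mathbf{H}$ with $\mathrm{tr}(\mathbf{V}\mathbf{V}^H)=N-K$, we get $\mathbf{h}_k^T\mathbf{V}\mathbf{V}^H\mathbf{h}_k^*\xrightarrow{a.s.}N(1-\beta)$, so $A_k^{\mathcal R}\xrightarrow{a.s.}(1-\rho)q(N-K)=(1-\rho)(1-\phi)P$.

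The crux is the quantization term $Q_k=\mathbf{h}_k^T\mathbf{C}_{\mathrm{DA}}\mathbf{h}_k^*=\rho p\sum_n|h_{kn}|^2[\mathbf{W}\mathbf{W}^H]_{nn}+\rho q\sum_n|h_{kn}|^2[\mathbf{V}\mathbf{V}^H]_{nn}$, where I must show $\sum_n|h_{kn}|^2[\mathbf{W}\mathbf{W}^H]_{nn}\to K$ and $\sum_n|h_{kn}|^2[\mathbf{V}\mathbf{V}^H]_{nn}\to N-K$ for \emph{both} AN methods, whence $Q_k\to\rho\big(pK+q(N-K)\big)=\rho P$. This is the only delicate step, because $\mathbf{W}$ (and, with null-space AN, also $\mathbf{V}$) depends on $\mathbf{h}_k$, so one may not simply replace the diagonal entries by their exchangeability-forced means $\beta$, resp.\ $1-\beta$, and factor out $\|\mathbf{h}_k\|^2$. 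The fix is to peel off the rank-one dependence on $\mathbf{h}_k$: for null-space AN, $\mathbf{V}\mathbf{V}^H=\mathbf{P}_{-k}-\mathbf{u}\mathbf{u}^H/\|\mathbf{u}\|^2$ with $\mathbf{u}=\mathbf{P}_{-k}\mathbf{h}_k^*$, where the $\mathbf{P}_{-k}$ part contributes $N(1-\beta)=N-K$ by a conditional law of large numbers (as in the treatment of $S_k$), while $\sum_n|h_{kn}|^2|u_n|^2/\|\mathbf{u}\|^2\le\max_n|h_{kn}|^2=O(\log N)=o(N)$ is negligible; in $\sum_n|h_{kn}|^2[\mathbf{W}\mathbf{W}^H]_{nn}=\sum_j\sum_n|h_{kn}|^2|w_{jn}|^2$ the $j=k$ term is likewise $\le\max_n|h_{kn}|^2$, and for $j\neq k$ one writes $\mathbf{P}_{-j}$ as the analogous projector with rows $j$ and $k$ both deleted minus a rank-one correction in $\mathbf{h}_k$, so that the leading part of each summand is a trace-lemma quantity converging to $1$, and there are $K-1$ such terms. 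For random AN this whole step is immediate since $\mathbf{V}\perp\mathbf{H}$.

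Finally I would assemble the limits in \eqref{SIQNR3}. For null-space AN, $\gamma_k^{\mathcal N}\to(1-\rho)(\tfrac1\beta-1)\phi P\big/\big(\rho P+\sigma_n^2\big)$; dividing numerator and denominator by $\sigma_n^2$ and using $\gamma_0=P/\sigma_n^2$ gives $\gamma^{\mathcal N}$. For random AN the extra leakage adds $(1-\rho)(1-\phi)P$ to the denominator, so $\gamma_k^{\mathcal R}\to(1-\rho)(\tfrac1\beta-1)\phi P\big/\big(\rho P+(1-\rho)(1-\phi)P+\sigma_n^2\big)$, which becomes $\gamma^{\mathcal R}$ after the same normalization. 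The main obstacle is precisely the $Q_k$ analysis, i.e.\ controlling the diagonal quadratic forms $\mathbf{h}_k^T\mathrm{diag}(\mathbf{W}\mathbf{W}^H)\mathbf{h}_k^*$ and $\mathbf{h}_k^T\mathrm{diag}(\mathbf{V}\mathbf{V}^H)\mathbf{h}_k^*$ in which the matrix is statistically entangled with the weighting vector; once the rank-one dependence on $\mathbf{h}_k$ is stripped away, everything reduces to routine applications of the trace lemma and the law of large numbers.
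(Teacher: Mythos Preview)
Your proposal is correct and reaches the same limits as the paper, but the route differs in two places worth noting. For $S_k$ (and the normalization of $\mathbf{w}_k$) the paper works globally: it invokes the Wishart trace identity $\mathrm{tr}\{(\mathbf{H}\mathbf{H}^H)^{-1}\}\xrightarrow{a.s.}\beta/(1-\beta)$ to conclude $\mathbf{H}\mathbf{W}\xrightarrow{a.s.}\sqrt{K(\tfrac1\beta-1)}\,\mathbf{I}_K$ in one stroke, whereas you use the column-wise projector representation $\mathbf{w}_k\propto\mathbf{P}_{-k}\mathbf{h}_k^*$ together with the trace lemma; both give $S_k\to(1-\rho)(\tfrac1\beta-1)\phi P$, and your version has the advantage that the claim $\|\mathbf{w}_k\|^2\to 1$ is made explicit (it follows from the same ingredients you already use). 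For $Q_k$ the difference is more substantive: the paper simply asserts $\mathrm{diag}(\mathbf{W}\mathbf{W}^H)\xrightarrow{a.s.}\tfrac{K}{N}\mathbf{I}_N$ and $\mathrm{diag}(\mathbf{V}\mathbf{V}^H)\xrightarrow{a.s.}\tfrac{N-K}{N}\mathbf{I}_N$ ``by the strong law of large numbers,'' then multiplies by $\|\mathbf{h}_k\|^2\approx N$, which sweeps the dependence between $\mathbf{h}_k$ and the diagonal under the rug. Your rank-one peeling argument (writing $\mathbf{V}\mathbf{V}^H=\mathbf{P}_{-k}-\mathbf{u}\mathbf{u}^H/\|\mathbf{u}\|^2$ for null-space AN, and similarly isolating the $\mathbf{h}_k$-dependent column in $\mathbf{W}\mathbf{W}^H$) is the honest way to justify that step, at the cost of a little extra bookkeeping. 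The treatments of $A_k^{\mathcal R}$ are essentially equivalent: the paper casts it as a $1\times 1$ Wishart variable, you as a trace-lemma quadratic form; both yield $(1-\rho)(1-\phi)P$.
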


\begin{proof}
See Appendix~A.
\end{proof}

Since convergence is preserved for continuous functions according to the Continuous Mapping Theorem \cite{Convergence}, we apply \emph{Lemma~\ref{lemma_SIQNR}} to \eqref{Rk} and thus the asymptotic achievable rates of each user for both the AN design methods are respectively obtained as
\begin{align}
\label{R_N}
R^{\mathcal{N}} = \log_2 \left(1+\frac{(1-\rho)\left(\frac{1}{\beta}-1\right)\phi\gamma_0}{\rho\gamma_0+1} \right)
\end{align}
and
\begin{align}
\label{R_R}
R^{\mathcal{R}} = \log_2 \left(1+\frac{(1-\rho)\left(\frac{1}{\beta}-1\right)\phi\gamma_0}{\rho\gamma_0+(1-\rho)(1-\phi)\gamma_0+1} \right) .
\end{align}
From \eqref{R_N} and \eqref{R_R}, it can be observed that both $R^{\mathcal{N}}$ and $R^{\mathcal{R}}$ increase with decreasing $\beta$, which implies that the achievable rate increases with more BS antennas or fewer users.
In addition, lower-resolution DACs cause higher quantization distortion with larger $\rho$, which leads to more severe user rate loss.
As $\phi$ increases, both $R^{\mathcal{N}}$ and $R^{\mathcal{R}}$ grow since more signal power is allocated to the users.
By comparing \eqref{R_N} and \eqref{R_R} with the same parameter values, it can be easily verified that $R^{\mathcal{N}}>R^{\mathcal{R}}$, as expected.
This is because random AN causes additional interference to the legitimate receivers while the more complicated null-space based AN mitigates interference leakage to the users except for the leakage due to the DAC quantization noise.
Considering extremely low-resolution DACs with $\rho\rightarrow 1$, we have $R^{\mathcal{R}}\rightarrow R^{\mathcal{N}}$ and thus random AN achieves almost the same rate performance as the null-space based AN.
Under this condition, hardly any of the AN lies in the null-space of the user's channel matrix after DAC quantization and the performance of null-space based AN tends to that of random AN.

\subsection{Ergodic Capacity of Eavesdropper}

Without loss of generality, suppose that the data of user $k$ is of interest to the eavesdropper.
In order to characterize the achievable secrecy rate, we assume the worst case that the eavesdropper has perfect knowledge of all the data channels and is able to cancel all inter-user interference before attempting to decode the message of user $k$ \cite{Sec_MIMO_AN_3}, \cite{Secure_Zhu_1}, \cite{Secure_Zhu_2}.
This assumption is reasonable because the quantization noise dominates the rate performance compared to the multiuser interference, especially for low-resolution DACs.
Using \eqref{ye} and under the assumption of large $N$ and $K$, the ergodic capacity of the eavesdropper can be evaluated as \cite{Capacity}
\begin{align}
C_k&=\mathbb{E}\Big\{\log_2\left(1+ (1-\rho)p \mathbf{w}_k^H \mathbf{H}_{\mathrm{e}}^H \mathbf{X}^{-1} \mathbf{H}_{\mathrm{e}}\mathbf{w}_k \right)\Big\}
\label{C_k0}
,
\end{align}
where $\mathbf{X}$ is defined as
\begin{align}
\mathbf{X}&\triangleq(1-\rho)q \mathbf{H}_{\mathrm{e}} \mathbf{V}\mathbf{V}^H \mathbf{H}_{\mathrm{e}}^H +  \mathbf{H}_{\mathrm{e}} \mathbf{C}_{\mathrm{DA}} \mathbf{H}_{\mathrm{e}}^H.
\label{X}
\end{align}
Since analysis of the eavesdropper's capacity in \eqref{C_k0} appears less tractable, as an alternative, we derive a tight upper bound for $C_k$, as given in the following theorem.

\begin{theorem}
\label{theorem_Ck}
For $N\rightarrow \infty$ and $\alpha+\beta<1$, an upper bound for the ergodic capacity of the eavesdropper is given by
\begin{align}
&\bar{C}\triangleq
\log_2\left(\!1\! +\!\frac{\frac{\alpha}{\beta}\phi(1-\phi+\tilde{\rho})}{\left(1\!-\!\frac{\alpha}{1\!-\!\beta}\right)(1\!-\!\phi)^2\!+\!2(1\!-\!\alpha)(1\!-\!\phi)\tilde{\rho}\!+\!(1\!-\!\alpha)\tilde{\rho}^2} \!\right)
\label{Ck_bound}
\!,
\end{align}
where $\tilde{\rho}\triangleq \frac{\rho}{1-\rho}$.
\end{theorem}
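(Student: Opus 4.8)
The plan is to first linearize the logarithm in \eqref{C_k0} by Jensen's inequality, and then evaluate the resulting quadratic form in the large-system limit using deterministic-equivalent techniques from random matrix theory, in the same spirit as the proof of \emph{Lemma~\ref{lemma_SIQNR}}. Since $x\mapsto\log_2(1+x)$ is concave, \eqref{C_k0} immediately gives
\begin{equation}
C_k\le\log_2\!\left(1+(1-\rho)p\,\bar{\Gamma}\right),\qquad \bar{\Gamma}\triangleq\mathbb{E}\!\left\{\mathbf{w}_k^H\mathbf{H}_{\mathrm{e}}^H\mathbf{X}^{-1}\mathbf{H}_{\mathrm{e}}\mathbf{w}_k\right\},
\end{equation}
so the task reduces to evaluating the single scalar $\bar{\Gamma}$ as $N\to\infty$. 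Next I would simplify $\mathbf{X}$: since $\textrm{tr}\{\mathbf{W}\mathbf{W}^H\}=K$, $\textrm{tr}\{\mathbf{V}\mathbf{V}^H\}=N-K$ and the columns of $\mathbf{W}$ and $\mathbf{V}$ asymptotically form orthonormal systems, the strong law of large numbers gives $\textrm{diag}(\mathbf{W}\mathbf{W}^H)\to\frac{K}{N}\mathbf{I}_N$ and $\textrm{diag}(\mathbf{V}\mathbf{V}^H)\to\frac{N-K}{N}\mathbf{I}_N$, hence $\mathbf{C}_{\mathrm{DA}}\to\frac{\rho P}{N}\mathbf{I}_N$ from \eqref{cor_DA}. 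Substituting in \eqref{X} yields $\mathbf{X}\to\mathbf{H}_{\mathrm{e}}\mathbf{\Theta}\mathbf{H}_{\mathrm{e}}^H$ with $\mathbf{\Theta}\triangleq(1-\rho)q\,\mathbf{V}\mathbf{V}^H+\frac{\rho P}{N}\mathbf{I}_N$, a positive definite matrix whose (asymptotic) spectrum takes only the two values $(1-\rho)q+\frac{\rho P}{N}$ on $\textrm{range}(\mathbf{V})$ and $\frac{\rho P}{N}$ on its orthogonal complement.

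The core step is the large-system evaluation of $\bar{\Gamma}$. Conditioning on $\mathbf{W}$ and $\mathbf{V}$, the matrix $\mathbf{H}_{\mathrm{e}}$ has i.i.d.\ Gaussian entries and is independent of them; decomposing $\mathbf{H}_{\mathrm{e}}$ according to the orthogonal splitting $\textrm{range}(\mathbf{V})\oplus\textrm{range}(\mathbf{V})^\perp$ produces two independent Gaussian blocks, so $\mathbf{X}$ becomes a sum of two independent scaled Wishart matrices and $\mathbf{H}_{\mathrm{e}}\mathbf{w}_k$ splits accordingly. Peeling off the rank-one contribution that carries $\mathbf{H}_{\mathrm{e}}\mathbf{w}_k$ with the Sherman--Morrison identity and then applying the trace lemma reduces $\bar{\Gamma}$ to a function of $\frac1M\textrm{tr}(\mathbf{X}^{-1})$, which in turn is obtained from the fixed-point (Stieltjes-transform) equation of the generalized Marchenko--Pastur law associated with the two-point population spectrum of $\mathbf{\Theta}$. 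The hypothesis $\alpha+\beta<1$ is used precisely here: it ensures $M<N-K$, so that $\mathbf{H}_{\mathrm{e}}\mathbf{V}$ has full row rank and $\textrm{tr}(\mathbf{X}^{-1})$ remains finite. Finally, substituting $p=\phi P/K$, $q=(1-\phi)P/(N-K)$, $\alpha=M/N$, $\beta=K/N$ and $\tilde{\rho}=\rho/(1-\rho)$, the transmit SNR $\gamma_0$ cancels and routine algebra brings the limit into the closed form \eqref{Ck_bound}; a uniform-integrability argument (bounded moments of the resolvent, again via $\alpha+\beta<1$) justifies interchanging the almost-sure limit with the expectation.

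The main obstacle is this core step. Two features make it nonroutine: the numerator vector $\mathbf{H}_{\mathrm{e}}\mathbf{w}_k$ and the matrix $\mathbf{X}$ in the inverse are statistically dependent, being built from the same $\mathbf{H}_{\mathrm{e}}$, and $\mathbf{X}$ is a ``sandwiched'' Wishart-type matrix $\mathbf{H}_{\mathrm{e}}\mathbf{\Theta}\mathbf{H}_{\mathrm{e}}^H$ with a non-identity $\mathbf{\Theta}$ (a rank-deficient term plus a scaled identity), rather than a plain sample covariance matrix. Decoupling the numerator from the inverse and correctly solving the associated fixed-point equation---and then carrying out the bookkeeping that collapses the result to the compact, AN-design-independent expression in \eqref{Ck_bound} (the independence of $\mathbf{H}_{\mathrm{e}}$ from $\mathbf{V}$ being what makes null-space and random AN yield the same bound)---is where essentially all the work lies; everything else is standard.
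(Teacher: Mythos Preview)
Your first two steps (Jensen's inequality and the reduction $\mathbf{C}_{\mathrm{DA}}\to\frac{\rho P}{N}\mathbf{I}_N$) coincide with the paper's, but the core evaluation of $\bar{\Gamma}$ takes a genuinely different route. The paper does not solve any Stieltjes-transform fixed-point equation or perform a Sherman--Morrison peeling. Instead, after writing $\mathbf{X}$ as a weighted sum of two independent Wishart matrices $\bigl[(1-\rho)q+\tfrac{\rho P}{N}\bigr]\mathbf{W}_1+\tfrac{\rho P}{N}\mathbf{W}_2$ (via the orthogonal completion $[\mathbf{V}~\mathbf{V}_0]$), it \emph{approximates} $\mathbf{X}$ by a single scaled Wishart $\mathcal{W}_M(\eta,\lambda\mathbf{I}_M)$ through first- and second-moment matching, then invokes the elementary facts $\mathbf{X}^{-1}\xrightarrow{a.s.}\frac{1}{\lambda(\eta-M)}\mathbf{I}_M$ and $\frac{1}{M}\mathbf{H}_{\mathrm{e}}^H\mathbf{H}_{\mathrm{e}}\to\mathbf{I}_N$ to obtain $\bar{\Gamma}=\frac{M}{\lambda(\eta-M)}$ directly; \eqref{Ck_bound} follows by substituting the closed forms of $\eta$, $\lambda$, and $p$. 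The hypothesis $\alpha+\beta<1$ enters only as the check $\eta>M$.

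Your deterministic-equivalent approach is more principled---it confronts head-on the dependence between $\mathbf{H}_{\mathrm{e}}\mathbf{w}_k$ and $\mathbf{X}$ that the paper effectively sidesteps by the moment-matching heuristic and by decoupling $\mathbf{X}^{-1}$ from $\mathbf{H}_{\mathrm{e}}$---but it is considerably heavier. One caution: since the specific expression \eqref{Ck_bound} is the output of the paper's moment-matching step, the ``routine algebra'' you anticipate will actually have to show that the exact Stieltjes-transform evaluation reproduces the moment-matched answer $\frac{M}{\lambda(\eta-M)}$, which is not automatic (matching two moments does not in general match the trace of the inverse). Also, your rank-one extraction makes the result depend on the component of $\mathbf{w}_k$ along $\mathrm{range}(\mathbf{V})$ versus $\mathrm{range}(\mathbf{V})^\perp$, which differs between the two AN designs; in the paper's argument, by contrast, the AN-design independence of $\bar{C}$ is immediate because the moment-matched $(\eta,\lambda)$ depend only on $p,q,\rho$ and not on how $\mathbf{w}_k$ sits relative to $\mathbf{V}$.
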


\begin{proof}
See Appendix~B.
\end{proof}

From \emph{Theorem~\ref{theorem_Ck}}, we have the following observations.

1)
The expression for the eavesdropper's capacity in \eqref{C_k0} only exists if $\mathbf{X}$ in \eqref{X} is invertible.
When $\rho\rightarrow 0$, we have $\mathbf{X}\rightarrow q \mathbf{H}_{\mathrm{e}} \mathbf{V}\mathbf{V}^H \mathbf{H}_{\mathrm{e}}^H$ since $\mathbf{C}_{\mathrm{DA}} \rightarrow \mathbf{0}$ from \eqref{cor_DA}.
In this case, $\mathbf{X}$ is invertible if $N-K > M$ since the columns of the tall matrix, $\mathbf{V}$, form an orthogonal basis for asymptotically large $N$ and the elements of $\mathbf{H}_{\mathrm{e}}$ are i.i.d. complex Gaussian distributed.
Similarly for $\rho\rightarrow 1$, $\mathbf{X}\rightarrow \mathbf{H}_{\mathrm{e}}\left[ p~\textrm{diag}(\mathbf{W}\mathbf{W}^H) +q~\textrm{diag}(\mathbf{V}\mathbf{V}^H)\right]\mathbf{H}_{\mathrm{e}}^H$ is invertible if $N>M$.
Combining the above two conditions, we see that $\mathbf{X}$ is invertible when $N-K > M$ regardless of the value of $\rho\in(0,1)$.
This results in the same constraint, i.e., $\alpha+\beta<1$, as in \emph{Theorem~\ref{theorem_Ck}}, and is a common condition for massive MIMO systems with a large $N$.

2)
From \eqref{Ck_bound}, it is obvious that $\bar{C}$ is monotonically increasing with $\alpha$.
This implies that the BS can reduce the amount of private information leaked to the eavesdropper by deploying more transmit antennas,
while the eavesdropper can improve its wiretapping capability by employing more receive antennas.

3)
Given $\alpha$, $\rho$, and $\phi$, the effect of $\beta$ on $\bar{C}$ is generally not monotonic.
By characterizing the derivative of $\bar{C}$ with respect to (w.r.t.) $\beta$, we find that $\bar{C}$ decreases for $\beta\in(0,\bar{\beta})$, while it increases when $\beta\in(\bar{\beta},1-\alpha)$, where
\begin{align}
\label{beta_bar}
\bar{\beta}\triangleq 1-\sqrt{\frac{\alpha(1-\phi)^2}{(1-\alpha)\left[(1-\phi)+\tilde{\rho}\right]^2+\alpha(1-\phi)^2}}.
\end{align}
This can be explained as follows.
When $\beta$ is small, the transmit power allocated to each user decreases significantly with increasing $\beta$ and thus the eavesdropper's capacity decreases accordingly.
As $\beta$ continues increasing, the impact of the reduced power per user becomes less significant.
When $\beta$ approaches $1-\alpha$, $\mathbf{X}$ becomes ill-conditioned and the eavesdropper's capacity improves.
In addition, it is noted that $\bar{\beta}$ can be larger than $1-\alpha$ for large values of $\rho$ and $\phi$.
Under this condition, $\bar{C}$ decreases monotonically for $\beta\in(0,1-\alpha)$.

4)
The parameter $\tilde{\rho}\in(0,\infty)$ represents the influence of the low-resolution DACs on the capacity of the eavesdropper.
By characterizing the derivative of $\bar{C}$ w.r.t. $\tilde{\rho}$, we find that $\frac{\partial \bar{C} }{\partial \tilde{\rho}}<0,~\forall \tilde{\rho}$.
It implies that $\bar{C}$ decreases with $\tilde{\rho}$, and hence with $\rho$.
Since $\rho$ increases with decreasing DAC resolution $b_{\mathrm{DA}}$, a smaller $b_{\mathrm{DA}}$ leads to a lower $\bar{C}$ due to the increasing power of the quantization noise.
This implies that the utilization of low-resolution DACs makes some contribution to protecting the legitimate users from eavesdropping, although it concurrently decreases the achievable user rate.

5)
It is found that $\bar{C}$ increases with $\phi$, i.e., $\frac{\partial \bar{C} }{\partial \phi}>0$, as the eavesdropper's capacity increases with decreasing AN power.
Assuming that there is no AN, i.e., $\phi=1$, $\bar{C}$ in \eqref{Ck_bound} achieves the maximum which is given by
\begin{align}
\bar{C}&=\log_2\left[1+ \frac{\alpha}{(1-\alpha)\beta\tilde{\rho}} \right]
\label{Ck_bound_2}
.
\end{align}
Note that $\bar{C}$ does not grow without an upper bound even if AN is not present due to the low-resolution DAC quantization.
To a certain extent, the quantization noise acts as a type of AN which helps to degrade the eavesdropper's capacity by producing unavoidable interference.
In this case, $\bar{C}$ becomes a monotonically decreasing function w.r.t. $\beta\in(0,1-\alpha)$ because $\bar{\beta}=1>1-\alpha$ by substituting $\phi=1$ into \eqref{beta_bar}.

\subsection{Lower Bound for the Achievable Secrecy Rate}
Applying \emph{Lemma~\ref{lemma_sec_rate}} and using \eqref{R_N}, \eqref{R_R}, and \eqref{Ck_bound}, a lower bound for the achievable secrecy rate of each user is obtained as follows
\begin{align}
\label{Sec_rate}
\underline{R}_\mathrm{sec}^\Psi=\left[R^\Psi-\bar{C}\right]^+,
\end{align}
where $\Psi\in\{\mathcal{N},\mathcal{R}\}$.
Using the results derived above, expressions for $\underline{R}_\mathrm{sec}^\mathcal{N}$ and $\underline{R}_\mathrm{sec}^\mathcal{R}$ are respectively obtained as
\begin{align}
\underline{R}_\mathrm{sec}^\mathcal{N} =
\left[\log_2\left(1+\frac{(1-\rho)\left(\frac{1}{\beta}-1\right)\phi\gamma_0}{\rho\gamma_0+1}\right)
-\log_2\left(1+\frac{\alpha\phi\left(\frac{1}{\beta}-1\right)\mu}{(\nu+\alpha\beta)\mu^2-\zeta}
\right)\right]^+
\label{Sec_R_N}
,
\end{align}
and
\begin{align}
\underline{R}_\mathrm{sec}^\mathcal{R} =
\left[\log_2\left(1+\frac{(1-\rho)\left(\frac{1}{\beta}-1\right)\phi\gamma_0}{\rho\gamma_0+(1-\rho)(1-\phi)\gamma_0+1}\right)
-\log_2\left(1+\frac{\alpha\phi\left(\frac{1}{\beta}-1\right)\mu}{(\nu+\alpha\beta)\mu^2-\zeta}
\right)\right]^+
\label{Sec_R_R}
\!,
\end{align}
where we define $\nu\triangleq 1-\alpha-\beta$, $\mu\triangleq 1-\phi+\tilde{\rho}$, and $\zeta=\alpha\beta(1-\phi)^2$ for notational simplicity.
These closed-form expressions allow us to gain insight into the impact of the various system parameters, as detailed in the next sections.

\section{Secrecy Rate Analysis}

In this section, we analyze the impact of various parameters, including $\alpha$, $\beta$, $\rho$, and $\phi$, on the secrecy rate in massive MIMO systems using low-resolution DACs.

\subsection{Impact of Antenna and User Loading Ratios}
We first analyze the impact of the antenna ratio $\alpha$ defined in \eqref{alpha}.
In \eqref{Sec_rate}, $\bar{C}$ increases monotonically with $\alpha$ as indicated before while $R^\Psi$ is independent of $\alpha$.
As a consequence, $\underline{R}_\mathrm{sec}^\Psi$ is monotonically decreasing w.r.t. $\alpha$.
Thus, a threshold value, $\bar{\alpha}$, may exist such that no positive secrecy rate can be achieved when $\alpha>\bar{\alpha}$, regardless of the values of other parameters.
In other words, secure transmission cannot be achieved if the eavesdropper possesses enough antennas.

Since AN is injected to enhance the secrecy rate, we consider the special case that almost all the power is allocated to generate AN, i.e., $\phi\rightarrow 0$.
By setting $\underline{R}_\mathrm{sec}^\Psi=0$ in \eqref{Sec_R_N} and \eqref{Sec_R_R}, $\bar{\alpha}$ is obtained as
\setcounter{equation}{28}
\begin{align}
\bar{\alpha}^{\mathcal{N}}
=\frac{(1-\beta)\gamma_0}{(\rho+1)\gamma_0+1-\beta\gamma_0\rho(2-\rho)}
\label{alpha_S_N}
\end{align}
and
\begin{align}
\bar{\alpha}^{\mathcal{R}}
=\frac{(1-\beta)\gamma_0}{2\gamma_0+1-\beta\gamma_0\rho(2-\rho)}.
\label{alpha_S_R}
\end{align}
Since $\rho\in(0,1)$, we have $\bar{\alpha}^{\mathcal{N}}>\bar{\alpha}^{\mathcal{R}}$, which implies that the null-space based AN can tolerate a larger number of eavesdropper antennas than the random AN at the expense of higher computational complexity and the need for CSI.
Interestingly, it can be observed that $\bar{\alpha}^{\mathcal{N}}\rightarrow\bar{\alpha}^{\mathcal{R}}$ when $\rho\rightarrow 1$.
This is because the null-space based AN tends to be randomly distributed in the signal space after low-resolution DAC quantization.
Note that both $\bar{\alpha}^{\mathcal{N}}$ and $\bar{\alpha}^{\mathcal{R}}$ decrease with $\beta$.
Next, we focus on the extreme condition when $\beta$ reduces to near $0$:
\begin{align}
\lim_{\beta\rightarrow 0} \bar{\alpha}^{\mathcal{N}}
=\frac{\gamma_0}{(\rho+1)\gamma_0+1}
\label{alpha_S_N_2}
\end{align}
and
\begin{align}
\lim_{\beta\rightarrow 0} \bar{\alpha}^{\mathcal{R}}
=\frac{\gamma_0}{2\gamma_0+1}.
\label{alpha_S_R_2}
\end{align}
Under this circumstance, $\lim\limits_{\beta\rightarrow 0}\bar{\alpha}^{\mathcal{R}}$ is independent of $\rho$ because the DAC quantization does not statistically change the randomness of the random AN.
By increasing $\gamma_0$, both $\lim\limits_{\beta\rightarrow 0}\bar{\alpha}^{\mathcal{N}}$ and $\lim\limits_{\beta\rightarrow 0}\bar{\alpha}^{\mathcal{R}}$ grow accordingly, thus improving the robustness for both AN design methods.
In all cases, however, the two thresholds are ultimately bounded above by $\lim\limits_{\beta\rightarrow 0}\bar{\alpha}^{\mathcal{N}}<\frac{1}{\rho+1}$ and $\lim\limits_{\beta\rightarrow 0}\bar{\alpha}^{\mathcal{R}}<\frac{1}{2}$, respectively.

One can also study the impact of user loading ratio $\beta$ defined in \eqref{beta}.
We take the derivative of $\underline{R}_\mathrm{sec}^\Psi$ w.r.t. $\beta$ and obtain that $\frac{\partial \underline{R}_\mathrm{sec}^\Psi}{\partial \beta}<0$.
Hence, by combining the observations from \eqref{Ck_bound}, it is reasonable to expect that the secrecy rate will be enhanced with a smaller $\beta$.
Furthermore, adding more antennas at the BS can offer a larger beamforming gain, or alternatively, a smaller number of users leads to higher per-user transmit power.

\subsection{Impact of DAC Distortion Parameter}
Since both $R^\Psi$ and $\bar{C}$ decrease with increasing $\rho$ due to the low-resolution DAC quantization, the impact of $\rho$ on the secrecy rate, $\underline{R}_\mathrm{sec}^\Psi$, is unclear.
According to \eqref{Sec_rate} and assuming a positive secrecy rate, we have
\begin{align}
\frac{\partial \underline{R}_\mathrm{sec}^{\Psi} }{\partial \rho}=\frac{\partial R^{\Psi} }{\partial \rho}-\frac{\partial \bar{C} }{\partial \rho}.
\end{align}
On one hand, $\frac{\partial \bar{C} }{\partial \rho}<0$ is independent of $\gamma_0$ since we assume a near-zero thermal noise power at the eavesdropper.
On the other hand, $\frac{\partial R^{\Psi} }{\partial \rho}<0$ and decreases with large $\gamma_0$ because the quantization noise dominates the thermal noise at high SNRs.
Thus, we conclude that there exists a $\bar{\gamma}_0^{\Psi}\in(0,\infty)$ which guarantees that $\frac{\partial \underline{R}_\mathrm{sec}^{\Psi} }{\partial \rho}>0$ for $\gamma_0\in(0,\bar{\gamma}_0^{\Psi})$ and $\frac{\partial \underline{R}_\mathrm{sec}^{\Psi} }{\partial \rho}<0$ for $\gamma_0\in(\bar{\gamma}_0^{\Psi},\infty)$.
Interestingly, lower-resolution DACs can achieve higher secrecy rate at low SNR, because the eavesdropper's capacity $\bar{C}$ decreases faster than $R^{\Psi}$ does with an increasing $\rho$.
On the other hand, at high SNR, higher-resolution DACs are advantageous compared to those with lower-resolution.

For the null-space AN method, the expression for $\frac{\partial \underline{R}_\mathrm{sec}^{\Psi} }{\partial \rho}$ is given below:
\begin{align}
\frac{\partial \underline{R}_\mathrm{sec}^{\mathcal{N}} }{\partial \rho}
 = &-\frac{\left(\frac{1}{\beta}-1\right)\phi(\gamma_0+1)\gamma_0}{\ln 2~(\rho\gamma_0\!+\!1)\left[\rho\gamma_0+(1-\rho)\left(\frac{1}{\beta}-1\right)\phi\gamma_0+1\right]}
\nonumber
\\
& + \frac{\alpha\phi\left(\frac{1}{\beta}-1\right)\left[(\nu+\alpha\beta)\mu^2+\zeta\right]}
 {\ln 2(\!1\!-\!\rho\!)^2\!\!\left[(\nu\!+\!\alpha\beta)\mu^2\!-\!\zeta\right]\!\!\left[\!(\nu\!+\!\alpha\beta)\mu^2\!-\!\zeta+\alpha\phi\left(\!\frac{1}{\beta}\!-\!1\!\right)\mu\!\right]}
 \label{d_Rsec_2}
 \\
\triangleq &~\frac{1}{\ln 2}\frac{a^{\mathcal{N}}\gamma_0^2+b^{\mathcal{N}}\gamma_0+c^{\mathcal{N}}}{d^{\mathcal{N}}}
 \label{d_Rsec}
 ,
\end{align}
where \eqref{d_Rsec_2} utilizes $\frac{\partial \mu }{\partial \rho}=\frac{\partial \tilde{\rho} }{\partial \rho}=\frac{1}{(1-\rho)^2}$.
Obviously, the sign of $\frac{\partial \underline{R}_\mathrm{sec}^{\mathcal{N}} }{\partial \rho}$ depends on the values of the parameters $\alpha$, $\beta$, $\rho$, and $\phi$.
We have focused on the impact of $\gamma_0$ and regard the derivative as a quadratic equation w.r.t $\gamma_0$ as in \eqref{d_Rsec}.
In general, we have that $d^{\mathcal{N}}>0$, $a^{\mathcal{N}}<0$, and $c^{\mathcal{N}}>0$. This implies that a solution for $\bar{\gamma}_0^\mathcal{N}$ exists by forcing \eqref{d_Rsec} to zero.
Solving the quadratic yields
\begin{align}
\label{gamma_bar_N}
\bar{\gamma}_0^{\mathcal{N}}=\frac{-b^{\mathcal{N}}-\sqrt{{b^{\mathcal{N}}}^2-4a^{\mathcal{N}}c^{\mathcal{N}}}}{2a^{\mathcal{N}}}.
\end{align}
If $\gamma_0<\bar{\gamma}_0^{\mathcal{N}}$, lower-resolution DACs can be used to enhance the secrecy rate since quantization noise degrades the eavesdropper's capacity more pronouncedly than the user rate.
While for $\gamma_0>\bar{\gamma}_0^{\mathcal{N}}$, the infinite-resolution DACs achieve the best performance.
Since the expressions for $a^{\mathcal{N}}$, $b^{\mathcal{N}}$, $c^{\mathcal{N}}$, and $d^{\mathcal{N}}$ are generally complicated, we consider a special case with $\rho\rightarrow 0$, which means that ideal DACs with infinite resolution are assumed.
Under this condition, the related parameters are obtained as
\begin{align}
\label{a_N}
a^{\mathcal{N}}= -\nu(1-\phi)\phi \left[ \nu(1-\phi)+\alpha\phi\left(\frac{1}{\beta}-1\right)\right],~~~~~~~~
\end{align}
\begin{align}
\label{b_N}
b^{\mathcal{N}}= 2\alpha^2\phi^2(1\!-\!\beta)+\alpha\phi^3\nu\left(\frac{1}{\beta}\!-\!1\right)\!-\!\nu^2(1\!-\!\phi)^2\phi
,~~~
\end{align}
\begin{align}
\label{c_N}
c^{\mathcal{N}}= \alpha \phi (\nu+2\alpha\beta),~~~~~~~~~~~~~~~~~~~~~~~~~~~~~~~~~~~~~~~~
\end{align}
\begin{align}
d^{\mathcal{N}}\!=\!\ln2~\nu(1\!-\!\phi)\left[\! \frac{\nu(1\!-\!\phi)\beta}{1-\beta}\!+\!\alpha\phi\right]\left[\left(\frac{1}{\beta}\!-\!1\right)\phi\gamma_0\!+\!1 \!\right]
\label{d_N}
\!.
\end{align}
By substituting \eqref{a_N}-\eqref{d_N} into \eqref{gamma_bar_N}, the threshold $\bar{\gamma}_0^{\mathcal{N}}$ for $\rho\rightarrow 0$ is obtained.
Although the threshold relies on $\rho$ in general, the obtained $\bar{\gamma}_0^{\mathcal{N}}$ can approximately be applied to all values of $\rho\in(0,1)$, which is verified by the simulation results in Section~\uppercase\expandafter{\romannumeral5}.

For the random AN design method, similar manipulations can be conducted and the threshold SNR $\bar{\gamma}_0^{\mathcal{R}}$ is obtained as follows
\begin{align}
\label{gamma_bar_R}
\bar{\gamma}_0^{\mathcal{R}}=\frac{-b^{\mathcal{R}}-\sqrt{{b^{\mathcal{R}}}^2-4a^{\mathcal{R}}c^{\mathcal{R}}}}{2a^{\mathcal{R}}},
\end{align}
where
\begin{align}
\label{a_R}
a^{\mathcal{R}}=& \!-\!\nu(1-\phi)\phi \left[ \nu(1-\phi)+\alpha\phi\left(\frac{1}{\beta}-1\right)\right]
\!+\!\alpha\phi(\!1\!-\!\phi)(\nu+2\alpha\beta)\left[\left(\frac{1}{\beta}\!-\!1\right)\phi\!+\!1\!-\!\phi\right]
,
\end{align}
\begin{align}
\label{b_R}
b^{\mathcal{R}}=& 2\alpha^2\phi^2(1-\beta)+\alpha\phi^3\nu\left(\frac{1}{\beta}-1\right)-\nu^2(1-\phi)^2\phi
+2\alpha\phi(1-\phi)(\nu+2\alpha\beta)
,~~~~~~~~~~~~
\end{align}
\begin{align}
\label{c_R}
c^{\mathcal{R}}= \alpha \phi (\nu+2\alpha\beta)
,~~~~~~~~~~~~~~~~~~~~~~~~~~~~~~~~~~~~~~~~~~~~~~~~~~~~~~~~~~~~~~~~~~~~~~~~~~~~~~~~~
\end{align}
\begin{align}
d^{\mathcal{R}}=&\ln2\!~\!\nu(1\!-\!\phi)\left[ \frac{\nu(1-\phi)\beta}{1-\beta}+\alpha\phi\right]\left[(1-\phi)\gamma_0\!+\!1\right]
\left[\left(\frac{1}{\beta}-1\right)\phi\gamma_0+(1-\phi)\gamma_0+1\right]
\label{d_R}
.~~~
\end{align}
Similarly, $a^{\mathcal{R}}$, $b^{\mathcal{R}}$, $c^{\mathcal{R}}$, and $d^{\mathcal{R}}$ are obtained under the assumption of $\rho\rightarrow 0$ and $\bar{\gamma}_0^{\mathcal{R}}$ is also insensitive to the value of $\rho$.

\subsection{Impact of the Power Allocation Factor}
The above analysis was conducted assuming a fixed $\phi$. Now, we investigate the effect of this power allocation factor on the secrecy rate.
Since $\frac{\partial R^\Psi}{\partial \phi}>0$ and $\frac{\partial \bar{C}}{\partial \phi}>0$ as indicated above, the sign of $\frac{\partial \underline{R}_\mathrm{sec}^\Psi}{\partial \phi}=\frac{\partial R^\Psi}{\partial \phi}-\frac{\partial \bar{C}}{\partial \phi}$ cannot be immediately determined.

Take the secrecy rate in \eqref{Sec_R_N} with the null-space AN for instance.
The derivative of $\underline{R}_\mathrm{sec}^\mathcal{N}$ w.r.t. $\phi$ is calculated as
\begin{align}
&\frac{\partial \underline{R}_\mathrm{sec}^\mathcal{N}}{\partial \phi}\!=\!
\frac{(1-\rho)(\frac{1}{\beta}-1)\gamma_0}{\ln2\left[\rho\gamma_0\!+\!1\!+\!(\!1\!-\!\rho)(\frac{1}{\beta}\!-\!1\!)\gamma_0\phi\right]}
\!-\!\frac{\alpha\!\left(\!\frac{1}{\beta}\!-\!1\!\right)\!\!\left[\!\frac{(\nu\!+\!\alpha\beta)\mu^2}{1-\rho}\!-\!2\alpha\beta\mu\phi(\!1\!-\!\phi)\!-\!\alpha\beta(\!1\!-\!\phi)^2(\mu\!-\!\phi)\!\right]}
{\ln2~[(\nu\!+\!\alpha\beta)\mu^2\!-\!\zeta]\!\left[\!(\nu\!+\!\alpha\beta)\mu^2\!-\!\zeta+\alpha\phi\left(\!\frac{1}{\beta}\!-\!1\!\right)\mu\right]},
\label{d_Rs_phi}
\end{align}
where we use the fact that $\frac{\partial \mu}{\partial \phi}=-1$.
For small $\phi$ we have $\frac{\partial \underline{R}_\mathrm{sec}^\mathcal{N}}{\partial \phi}>0$ while for large $\phi$ we have $\frac{\partial \underline{R}_\mathrm{sec}^\mathcal{N}}{\partial \phi}<0$.
Thus, there exists an optimal $\phi$, i.e., $\phi^*$, that achieves the highest secrecy rate.
By forcing $\frac{\partial \underline{R}_\mathrm{sec}^\mathcal{N}}{\partial \phi}=0$, the optimal $\phi^*$ is directly obtained.
Since the expression in \eqref{d_Rs_phi} is generally intractable, we resort to the numerical bisection method to determine $\phi^*$.
In addition, we derive a closed-form expression for an approximate $\phi^*$ in the following.
We assume that $\alpha\beta\ll 1$, which generally holds in massive MIMO networks with large antenna arrays at the BS.
Then, $\frac{\partial \underline{R}_\mathrm{sec}^\mathcal{N}}{\partial \phi}$ in \eqref{d_Rs_phi} approximately becomes
\begin{align}
\frac{\partial \underline{R}_\mathrm{sec}^\mathcal{N}}{\partial \phi}=&
\frac{(1-\rho)(\frac{1}{\beta}-1)\gamma_0}{\ln~2\left[\rho\gamma_0+1+(1-\rho)(\frac{1}{\beta}-1)\gamma_0\phi\right]}
-\frac{\alpha\left(\!\frac{1}{\beta}\!-\!1\!\right)\frac{\nu\mu^2}{1-\rho}}
{\ln2~\nu\mu^2\left[\nu\mu^2+\alpha\phi\left(\!\frac{1}{\beta}\!-\!1\!\right)\mu\right]}.
\label{d_Rs_phi_2}
\end{align}
Setting $\frac{\partial \underline{R}_\mathrm{sec}^\mathcal{N}}{\partial \phi}=0$, the optimal $\phi^*$ is obtained as
\begin{align}
\phi^{\mathcal{N}*}=\frac{\nu-\sqrt{\nu^2+\left(\alpha\rho+\frac{\alpha}{\gamma_0}-\nu\right)\left(1-\beta-\frac{\alpha}{\beta}\right)}}{(1-\rho)\left(1-\beta-\frac{\alpha}{\beta}\right)}.
\label{phi_star_N}
\end{align}

For random AN, a similar analysis can be conducted.
Under the same assumption $\alpha\beta\ll 1$, the optimal $\phi^{\mathcal{R}*}$ is given by
\begin{align}
&\phi^{\mathcal{R}*}=
\frac{(1\!+\!\gamma_0)(\nu-\alpha)\!-\!\sqrt{\alpha(\!1\!+\!\gamma_0\!)\!\left[\!\left(\!\frac{1}{\beta}\!+\!1\!\right)\!(\!\nu\!-\!\alpha\!)\!+\!\frac{1}{\gamma_0}\!\left(\!1\!-\!\beta\!-\!\frac{\alpha}{\beta}\!\right)\!\right]}}
{(1-\rho)\left[1-\beta-\frac{\alpha}{\beta}+\gamma_0(\nu-\alpha)\right]}.
\label{phi_star_R}
\end{align}

Due to the constraint that $\phi\in(0,1]$, we set $\phi^*=1$ if the obtained $\phi^*$ in \eqref{phi_star_N} and \eqref{phi_star_R} is larger than $1$. Under this condition, the secrecy rate increases monotonically with $\phi\in(0,1]$.
In Section~\uppercase\expandafter{\romannumeral5}, we will show that both $\phi^{\mathcal{N}*}$ in \eqref{phi_star_N} and $\phi^{\mathcal{R}*}$ in \eqref{phi_star_R} are accurate for various combinations of system parameters.

\section{Simulation Results}
In this section, we verify the tightness of the derived bounds and the obtained insights via numerical simulation.
We use the typical values for the distortion parameter $\rho$ in \cite{rho} for each DAC using $b_{\mathrm{DA}}$ bits for quantization.
For perfect DACs with $b_{\mathrm{DA}}\rightarrow\infty$, we set $\rho\rightarrow 0$.

\subsection{Ergodic Capacity of Eavesdropper}

\begin{figure}[tb]
\centering\includegraphics[width=0.5\textwidth,bb=20 200 570 610]{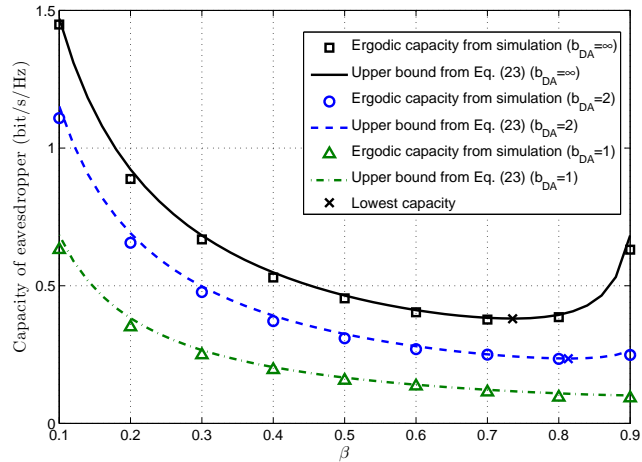}
\caption{Eavesdropper's capacity and the corresponding upper bounds versus $\beta$ ($N=100$, $M=7$, and $\phi=0.7$).}
\label{C_beta}
\end{figure}

\begin{figure}[tb]
\centering\includegraphics[width=0.5\textwidth,bb=20 200 570 610]{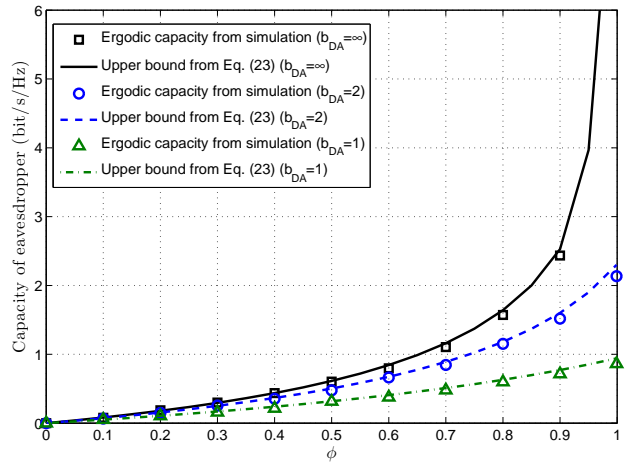}
\caption{Eavesdropper's capacity and our derived upper bound versus power allocation factor $\phi$ ($N=100$, $K=10$, and $M=5$).}
\label{C_alpha}
\end{figure}

We first study the tightness of the derived upper bound for the eavesdropper's capacity.
Fig. \ref{C_beta} compares the eavesdropper's capacity in \eqref{C_k0} and the upper bound in \eqref{Ck_bound}, for DAC resolutions $b_{\mathrm{DA}}=1, 2$, and $\infty$.
In general, our derived upper bound is tight for $\beta$ ranging from $0.1$ to $0.9$.
Clearly, the low-resolution DACs result in a capacity loss due to the interference caused by the quantization noise.
As accurately predicted by our analysis in \eqref{beta_bar}, the eavesdropper achieves the lowest capacity for $\bar{\beta}=0.7354$ and $\bar{\beta}=0.8133$ with $b_{\mathrm{DA}}=\infty$ and $b_{\mathrm{DA}}=2$, respectively.
These two points are denoted by markers $\times$ in the figure.
For $b_{\mathrm{DA}}=1$, we have $\bar{\beta}=0.9059$ according to \eqref{beta_bar} and thus $\bar{C}$ decreases monotonically for $\beta\in(0.1,0.9)$.

Fig. \ref{C_alpha} shows the capacity of the eavesdropper for $\phi$ ranging from $0$ to $1$.
Obviously, $\bar{C}$ increases monotonically with $\phi$. The lower the AN power, the higher the eavesdropper's capacity will be due to the power reduction in the interference.
In addition, we see that low-resolution DACs help to degrade the channel quality of the eavesdropper regardless of the value of $\phi$.
Assuming the eavesdropper is able to perfectly cancel the inter-user interference and the thermal noise is negligibly small, the capacity approaches infinity with $\phi\rightarrow 1$ and $b_{\mathrm{DA}}=\infty$ since there is no remaining interference.
Thus, AN is necessary for conventional secure communication when perfect DACs are available.
However, this is not the case for low-resolution DACs since the quantization noise can protect the confidential information from eavesdropping.
Under the assumption of $\phi\rightarrow 1$, the capacity converges to $2.2985$ and $0.9407$, instead of infinity, for $b_{\mathrm{DA}}=2$ and $b_{\mathrm{DA}}=1$, respectively.

\subsection{Achievable Ergodic Secrecy Rate}

\begin{figure}[tb]
\centering\includegraphics[width=0.5\textwidth,bb=20 200 570 610]{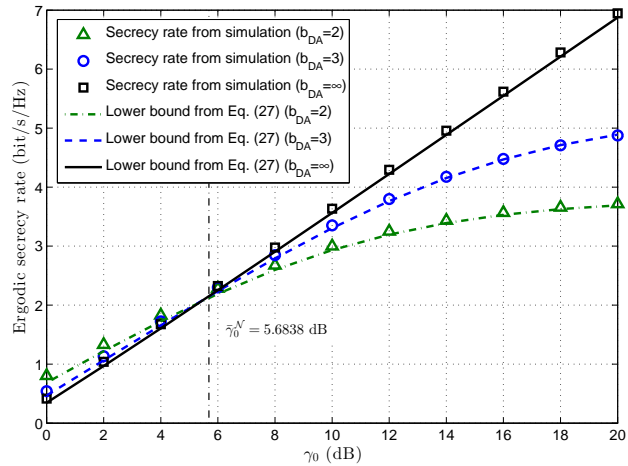}
\caption{Ergodic secrecy rate and lower bound versus SNR with null-space AN ($N=128$, $K=8$, $M=16$, and $\phi=0.8$).}
\label{R_sec_N}
\end{figure}

In the following, we verify the accuracy of the derived lower bound for the achievable secrecy rate.
Fig.~\ref{R_sec_N} shows the ergodic secrecy rate and its lower bound in \eqref{Sec_R_N} with the null-space AN method.
The dotted markers correspond to the simulation results while the solid lines correspond to the lower bound.
We observe that the derived bound is tight for $\gamma_0$ ranging from 0~dB to 20~dB.
With infinite-resolution DACs, the secrecy rate increases proportionally with $\gamma_0$ while the low-resolution DAC quantization causes significant rate loss at high SNR.
From \eqref{gamma_bar_N}, the SNR threshold is computed as $\bar{\gamma}_0^\mathcal{N}=5.6838$~dB with $\rho\rightarrow 0$, i.e., $b_{\mathrm{DA}}\rightarrow \infty$.
When $\gamma_0<\bar{\gamma}_0^\mathcal{N}$, lower-resolution DACs can provide higher secrecy rate since the achievable rate of each user decreases more slowly than the eavesdropper's capacity as the DAC resolution decreases.
At low SNR, thermal noise dominates at the users and the DAC quantization affects the eavesdropper's capacity more pronouncedly.
On the other hand, when $\gamma_0>\bar{\gamma}_0^\mathcal{N}$, infinite-resolution DACs achieve the highest secrecy rate.
In addition, we observe that the obtained $\bar{\gamma}_0^\mathcal{N}$ can also be applied to low-resolution DACs with $b_{\mathrm{DA}}=3$ and $b_{\mathrm{DA}}=2$ as indicated before, although technically $\bar{\gamma}_0^\mathcal{N}$ depends on the quantization distortion parameter $\rho$.
This makes the DAC resolution allocation much simpler and appealing in practice since only one $\bar{\gamma}_0^\mathcal{N}$ needs to be calculated.

\begin{figure}[tb]
\centering\includegraphics[width=0.5\textwidth,bb=20 200 570 610]{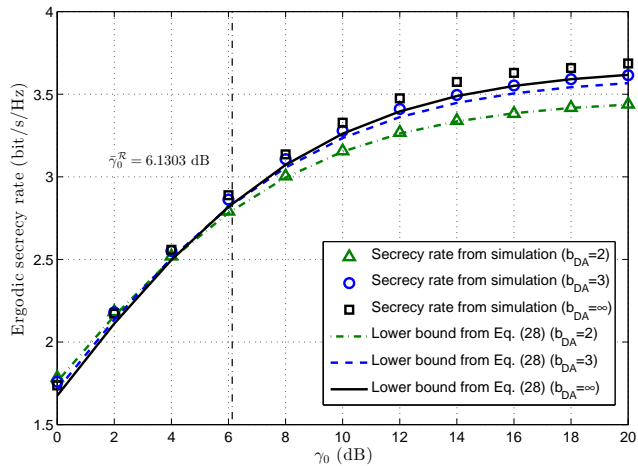}
\caption{Ergodic secrecy rate and lower bound versus SNR with random AN ($N=128$, $K=8$, $M=6$, and $\phi=0.7$).}
\label{R_sec_R}
\end{figure}

Fig.~\ref{R_sec_R} illustrates the ergodic secrecy rate and the derived lower bound in \eqref{Sec_R_R} with random AN.
The secrecy rate increases with SNR but saturates eventually at high SNR, even if infinite-resolution DACs are adopted.
This is because the random AN degrades the achievable rate of the legitimate users while the null-space AN only causes interference to the eavesdropper.
From \eqref{gamma_bar_R}, $\bar{\gamma}_0^\mathcal{R}$ is calculated as $6.1303$~dB.
In order to enhance the secrecy rate, increasing the DAC resolution is recommended if $\gamma_0>6.1303$~dB but not if $\gamma_0\leq6.1303$~dB.
In both Fig.~\ref{R_sec_N} and Fig.~\ref{R_sec_R}, a fixed $\phi$ is assumed since it is in general difficult to optimize $\phi$ analytically.
In order to alleviate the performance degradation of fixed power allocation, we present an approximate $\phi^*$ in \eqref{phi_star_N} and \eqref{phi_star_R} for the null-space AN and the random AN methods, respectively.
Corresponding simulations are illustrated in Fig.~\ref{R_opt_phi_N} and Fig.~\ref{R_opt_phi_R} in the next subsection.

\begin{figure}[tb]
\centering\includegraphics[width=0.5\textwidth,bb=20 200 570 610]{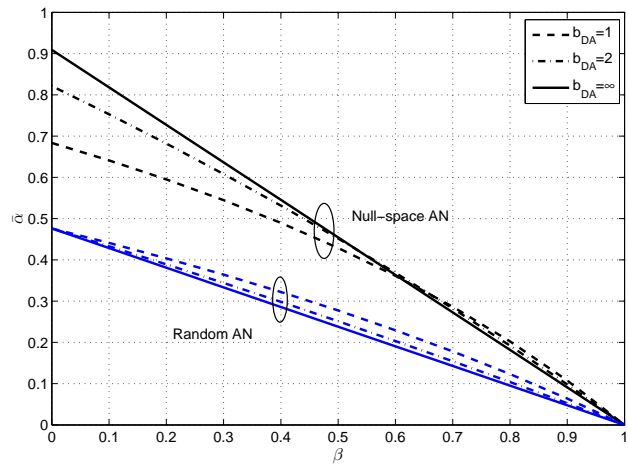}
\caption{Threshold ratio $\bar{\alpha}$ for positive secrecy rate versus $\beta$ ($\gamma_0=10$~dB).}
\label{alpha_bar}
\end{figure}

Fig. \ref{alpha_bar} depicts $\bar{\alpha}$ in \eqref{alpha_S_N} and \eqref{alpha_S_R} for null-space and random AN, respectively.
As indicated in Section~\uppercase\expandafter{\romannumeral4}.A, a positive secrecy rate can be achieved only if $\alpha<\bar{\alpha}$.
It is observed from Fig. \ref{alpha_bar} that $\bar{\alpha}$ decreases monotonically with $\beta$.
Given a fixed $N$, the transmit power of each user decreases with an increasing number of users $K$, i.e., increasing $\beta$, and thus fewer antennas are required at the eavesdropper to decode the information.
Note that even with $\beta\rightarrow 0$, a threshold $\bar{\alpha}<1$ exists, which implies that the eavesdropper is still able to successfully wiretap as long as it employs enough antennas.
Comparing null-space and random AN, we find that $\bar{\alpha}^\mathcal{N}>\bar{\alpha}^\mathcal{R}$ for $\gamma_0=10$~dB.
This implies that higher hardware cost is required at the eavesdropper to resist null-space AN than random AN.

\subsection{Optimal Power Allocation}

\begin{figure}[tb]
\centering\includegraphics[width=0.5\textwidth,bb=20 200 570 610]{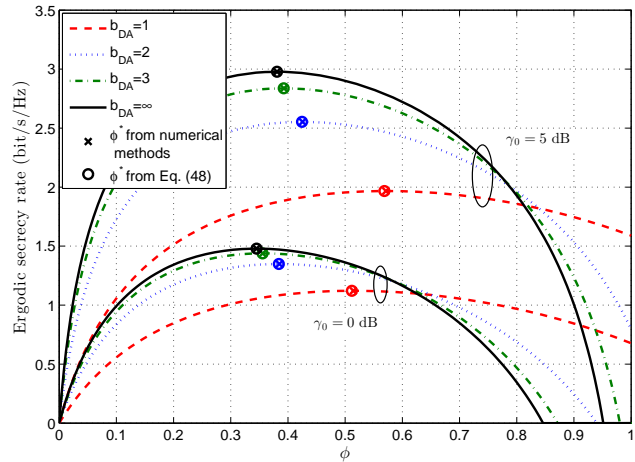}
\caption{Ergodic secrecy rate versus $\phi$ with null-space AN ($N=128$, $K=8$, and $M=16$).}
\label{R_phi_N}
\end{figure}

\begin{figure}[tb]
\centering\includegraphics[width=0.5\textwidth,bb=20 200 570 610]{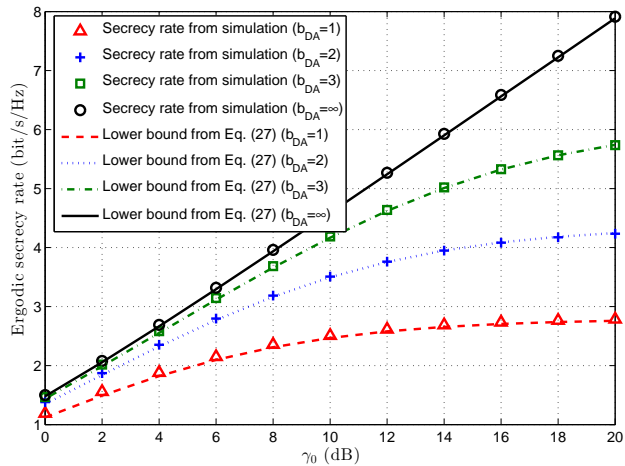}
\caption{Ergodic secrecy rate with the optimal $\phi^*$ versus SNR for null-space AN method ($N=128$, $K=8$, and $M=16$).}
\label{R_opt_phi_N}
\end{figure}

In the following, the accuracy of the obtained closed-form expressions for the approximately optimal $\phi$ are verified.
For null-space AN, Fig. \ref{R_phi_N} shows the ergodic secrecy rate with $\phi$ ranging from $0$ to $1$.
We consider 1-3 bit DACs compared with the infinite-resolution case.
Interestingly, the infinite-resolution DACs achieve the highest secrecy rate when $\phi$ is small while the lower-resolution DACs provide better rate performance for large $\phi$.
On one hand, a high DAC resolution is needed when most of the transmit power is allocated to generate AN.
On the other hand, lower-resolution DACs can achieve higher secrecy rates when most of the power is used to transmit information signals.
In fact, DAC quantization noise serves as a kind of AN to improve communication security.
The markers $\times$ in the figure denote the optimal $\phi^*$ obtained by numerical methods while the circles represent the $\phi^{\mathcal{N}*}$ in \eqref{phi_star_N}. We can see that the two match exactly.
Specifically when $\gamma_0=0$ dB, we have $\phi^{\mathcal{N}*}=0.5117, 0.3841, 0.3552, 0.3452$ for $b_\mathrm{DA}=1,2,3,\infty$, respectively.
For $\gamma_0=5$ dB, $\phi^{\mathcal{N}*}=0.5687, 0.4247, 0.3926, 0.3808$ for $b_\mathrm{DA}=1,2,3,\infty$, respectively.
For the same value of $\gamma_0$, the optimal $\phi^*$ increases with decreasing DAC resolution. This implies that more power should be allocated to the transmit signals with lower-resolution DACs.
Furthermore, Fig. \ref{R_opt_phi_N} shows the secrecy rate with the optimal $\phi^*$.
Comparing Fig. \ref{R_sec_N} with a fixed $\phi=0.8$, we see that low-resolution DACs inevitably degrade the secrecy rate regardless of the SNR.
If the optimal $\phi^*$ is achievable, higher-resolution DACs always provide more secure transmission.

\begin{figure}[tb]
\centering\includegraphics[width=0.5\textwidth,bb=20 200 570 610]{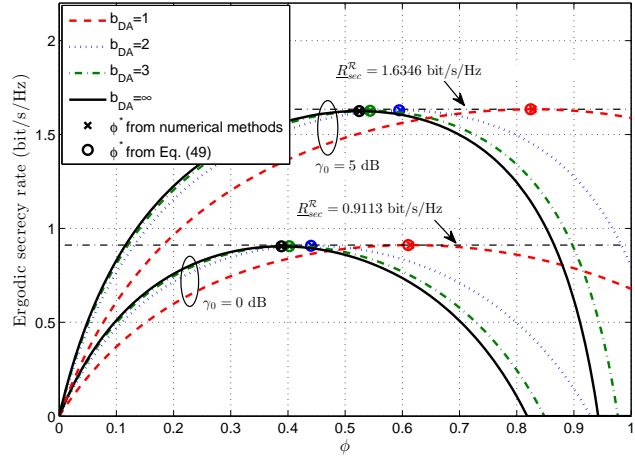}
\caption{Ergodic secrecy rate versus $\phi$ with random AN ($N=128$, $K=8$, and $M=16$).}
\label{R_phi_R}
\end{figure}

\begin{figure}[tb]
\centering\includegraphics[width=0.5\textwidth,bb=20 200 570 610]{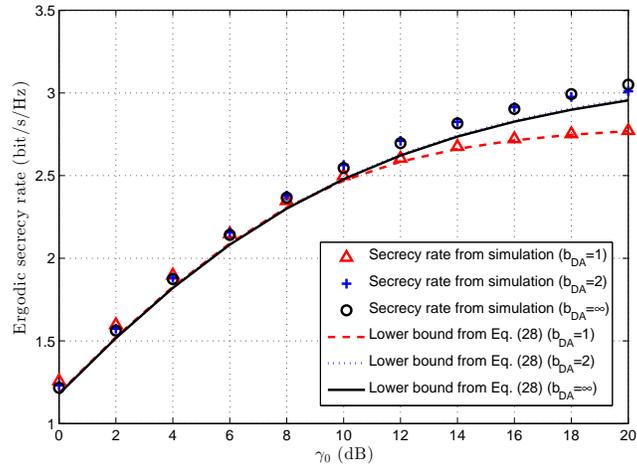}
\caption{Ergodic secrecy rate with the optimal $\phi^*$ versus SNR for random AN method ($N=128$, $K=8$, and $M=12$).}
\label{R_opt_phi_R}
\end{figure}

For random AN, Fig. \ref{R_phi_R} shows the achievable secrecy rate versus $\phi$ using low-resolution DACs.
The optimal $\phi^*$ obtained by numerical methods is denoted by $\times$ while the derived $\phi^{\mathcal{R}*}$ in \eqref{phi_star_R} is denoted by circles.
For the case that $\gamma_0=0$ dB, we have $\phi^{\mathcal{R}*}=0.6103, 0.4402, 0.4024, 0.3885$ while for $\gamma_0=5$ dB, we have $\phi^{\mathcal{R}*}=0.8243, 0.5960, 0.5435, 0.5247$, for $b_\mathrm{DA}=1,2,3,\infty$, respectively.
Unlike the results of the null-space AN method shown in Fig. \ref{R_phi_N}, the highest secrecy rates with the optimal $\phi^*$ are approximately equal for $b_{\mathrm{DA}}=1,2,3,$ and $\infty$, i.e., $\underline{R}_\mathrm{sec}^\mathcal{R}=1.6346$ and $0.9113$ for $\gamma_0=5$ and $0$~dB, respectively.
For various DAC resolutions, the same peak secrecy rate can be achieved as long as the optimal $\phi^*$ is used.
In other words, the impact of low-resolution DACs on secure transmission is insignificant.
This is because the DAC quantization noise acts as random AN at both the users and eavesdropper.
Although the quantization noise increases with a lower $b_{\mathrm{DA}}$, the same maximum secrecy rate can still be achieved by increasing $\phi$ to reduce the AN power.
Compare Fig.~\ref{R_phi_N} and Fig.~\ref{R_phi_R} and take $\gamma_0=0$ dB for instance.
When infinite-resolution DACs are deployed and using the optimal $\phi^*$, the highest secrecy rate is $\underline{R}_\mathrm{sec}^\mathcal{N}=1.4788$ with null-space AN, which is much larger than $\underline{R}_\mathrm{sec}^\mathcal{R}=0.9113$ with random AN.
When 1-bit DACs are considered, we have $\underline{R}_\mathrm{sec}^\mathcal{N}=1.1217$, which is closer to $\underline{R}_\mathrm{sec}^\mathcal{R}=0.9113$.
This implies that random AN becomes cost-efficient when low-resolution DACs are adopted. The advantage of null-space AN is marginal in this case.
The achievable secrecy rates are displayed in Fig. \ref{R_opt_phi_R} when the optimal $\phi^*$ is used.
As observed from Fig. \ref{R_phi_R}, the secrecy rates are generally not degraded by low-resolution DACs, except at high SNR with $b_{\mathrm{DA}}=1$.
Hence, using DAC resolutions beyond 1 bit is not beneficial in terms of secrecy rate.
This implies that low-resolution DACs can provide almost the same secure performance as infinite resolution DACs with random AN.
For the scenario in Fig. \ref{R_opt_phi_R} where 1-bit DACs are employed and $\gamma_0>9.8$~dB, the secrecy rate increases monotonically with $\phi\in(0,1]$ and therefore $\phi^*=1$, which is different from the cases with low SNR shown in Fig.~\ref{R_phi_R}. Under this condition, at least a two-bit DAC is needed at the BS to achieve the same secrecy rate as that in the infinite resolution case.

\section{Conclusions}
In this paper, we investigate the physical layer security of a multiuser massive MIMO system employing low-resolution DACs at the transmitter, in the presence of a passive eavesdropper.
A tight lower bound for the achievable secrecy rate of each user is derived.
We find that the DAC quantization noise can be regarded as additional AN provided by the BS and may contribute to the secure transmission.
Given a fixed power allocation factor $\phi$, low-resolution DACs can achieve superior secrecy performance under certain conditions, e.g., at low SNR or with large $\phi$.
If the optimal $\phi^*$ can be obtained, low-resolution DACs inevitably lead to secrecy rate loss with the null-space AN design method.
On the other hand, for random AN, low-resolution DACs achieve the same secrecy performance as high-resolution DACs at low SNR and thus the former are cost-efficient in this scenario.
Note that our derived results directly apply for the system with multi-antenna users if multiple data streams are transmitted to each user.
This is because in massive MIMO, an $L$-antenna user can be equivalently regarded as $L$ single-antenna users, due to the asymptotic orthogonality among channel vectors.
However, the extension becomes more complicated if a single data stream is transmitted.
Interesting future work includes further extending our current results to such a general secnario with multi-antenna users.

\begin{appendices}
\section{Proof of Lemma~\ref{lemma_SIQNR}}

In order to obtain the asymptotic expression for $\gamma_k$, we derive $S_k$, $I_k$, $Q_k$, and $A_k$ in \eqref{SIQNR3} one by one.
Consider a typical ZF-precoder under the constraint that $\textrm{tr}\{\mathbf{W}\mathbf{W}^H\}=K$, i.e.,
\begin{align}
\mathbf{W}=\sqrt{\frac{K}{\textrm{tr}\{(\mathbf{H}\mathbf{H}^H)^{-1}\}}}\mathbf{H}^H(\mathbf{H}\mathbf{H}^H)^{-1}.
\label{ZF}
\end{align}
It is well known that $\mathbf{H}\mathbf{H}^H\sim \mathcal{W}_k(N,\mathbf{I}_k)$, where $\mathcal{W}_m(n,\mathbf{\Sigma})$ denotes an $m\times m$ Wishart matrix with $n$ degrees of freedom and $\mathbf{\Sigma}$ is the covariance matrix of each column.
Assuming that $K$ and $N$ grow to infinity with a fixed ratio $\beta=\frac{K}{N}$, we have \cite{Wishart}
\begin{align}
\textrm{tr}\{(\mathbf{H}\mathbf{H}^H)^{-1}\}\xrightarrow{a.s.} \frac{\beta}{1-\beta}.
\label{tr}
\end{align}
Substituting \eqref{tr} in \eqref{ZF} yields
\begin{align}
\label{HW}
\mathbf{H}\mathbf{W}\xrightarrow{a.s.} \sqrt{K\left(\frac{1}{\beta}-1\right)}\mathbf{I}_k.
\end{align}
Thus, $S_k$ and $I_k$ converge to
\begin{align}
\label{S_asy}
S_k\xrightarrow{a.s.}(1-\rho)\phi P\left(\frac{1}{\beta}-1\right)
\end{align}
and
\begin{align}
\label{I_asy}
I_k\xrightarrow{a.s.}0,
\end{align}
respectively.

As for $Q_{k}$, the emphasis lies on the asymptotic characterizations of $\mathbf{C}_{\mathrm{DA}}$ in \eqref{cor_DA}.
For large $N$ and $K$, $\mathbf{C}_{\mathrm{DA}}$ converges to a scaled identity matrix as follows
\begin{align}
\label{C_asy}
\mathbf{C}_{\mathrm{DA}}\xrightarrow{a.s.} \rho \frac{P}{N} \mathbf{I}_N,
\end{align}
where we use \eqref{p}, \eqref{q}, and the fact that
\begin{align}
\textrm{diag}(\mathbf{W}\mathbf{W}^H)\xrightarrow{a.s.}\frac{K}{N} \mathbf{I}_N
\end{align}
and
\begin{align}
\textrm{diag}(\mathbf{V}\mathbf{V}^H)\xrightarrow{a.s.}\frac{N-K}{N} \mathbf{I}_N,
\end{align}
due to the strong law of large numbers.
Then, by substituting \eqref{C_asy} into \eqref{SIQNR3}, we have
\begin{align}
Q_k
\xrightarrow{a.s.}& \rho\frac{P}{N} \mathbf{h}_k^T\mathbf{h}_k^*
=\rho P
\label{Q_asy}
.
\end{align}

Finally for $A_{k}$, the result depends on the AN shaping matrix $\mathbf{V}$.
For the null-space AN method with $\mathbf{HV}=\mathbf{0}$, it is obvious that
\begin{align}
\label{A_asy_N}
A_k^{\mathcal{N}}=0.
\end{align}
For the random AN method, $A_k^{\mathcal{R}}$ in \eqref{SIQNR3} can be regarded as a matrix comprised of one single element, i.e., $A_k^{\mathcal{R}}\sim \mathcal{W}_1(N-K,(1-\rho)q)$, and it follows that
\begin{align}
A_k^{\mathcal{R}}&={\rm tr} \left\{A_k^{\mathcal{R}}\right\}
\nonumber
\\
&=(N-K)(1-\rho)q
\label{A_q}
\\
&=(1-\rho)(1-\phi)P,
\label{A_asy_R}
\end{align}
where \eqref{A_q} comes from the fact that ${\rm tr}\{\mathbf{A}\}=mn$ for a Wishart matrix $\mathbf{A}\sim\mathcal{W}_m(n,\mathbf{I}_m)$ \cite{Wishart},
and \eqref{A_asy_R} uses \eqref{q}.

Now, by substituting \eqref{S_asy}, \eqref{I_asy}, \eqref{Q_asy}, \eqref{A_asy_N}, and \eqref{A_asy_R} into \eqref{SIQNR3}, the asymptotic SIQNRs for the null-space and random AN methods are respectively obtained in \eqref{SIQNR_N} and \eqref{SIQNR_R}.

\section{Proof of Theorem~\ref{theorem_Ck}}

To begin with, we demonstrate that $\mathbf{X}$ defined in \eqref{X} can be approximated as a scaled Wishart matrix.
Substituting \eqref{C_asy} into \eqref{X} yields
\begin{align}
\mathbf{X}
&\xrightarrow{a.s.}(1-\rho)q  \mathbf{H}_{\mathrm{e}} \mathbf{V}\mathbf{V}^H \mathbf{H}_{\mathrm{e}}^H +\rho\frac{P}{N} \mathbf{H}_{\mathrm{e}} \mathbf{H}_{\mathrm{e}}^H
\nonumber
\\
&=(1-\rho)q  \mathbf{H}_{\mathrm{e}} \mathbf{V}\mathbf{V}^H \mathbf{H}_{\mathrm{e}}^H +\rho\frac{P}{N} \mathbf{H}_{\mathrm{e}} [\mathbf{V}~\mathbf{V}_0] [\mathbf{V}~\mathbf{V}_0]^H \mathbf{H}_{\mathrm{e}}^H
\label{X4}
\\
&=\left[(1-\rho)q +\rho\frac{P}{N}\right]
\underbrace{\mathbf{H}_1 \mathbf{H}_1^H}_{\mathbf{W}_1}
 +\rho\frac{P}{N}
\underbrace{\mathbf{H}_2 \mathbf{H}_2^H}_{\mathbf{W}_2}
\label{X5}
,
\end{align}
where \eqref{X4} uses the fact that $[\mathbf{V}~\mathbf{V}_0] [\mathbf{V}~\mathbf{V}_0]^H=\mathbf{I}_M$ since $[\mathbf{V}~\mathbf{V}_0]$ is a complete orthogonal basis with dimension $N$,
and \eqref{X5} utilizes the definitions $\mathbf{H}_1\triangleq \mathbf{H}_{\mathrm{e}} \mathbf{V}$ and $\mathbf{H}_2\triangleq \mathbf{H}_{\mathrm{e}} \mathbf{V}_0$.
From \eqref{X5}, $\mathbf{X}$ is statistically equivalent to a weighted sum of two scaled Wishart matrices, i.e., $\mathbf{X}_1\sim \mathcal{W}_M(N-K,\mathbf{I}_M)$ and $\mathbf{X}_2\sim \mathcal{W}_M(K,\mathbf{I}_M)$.
Strictly speaking, $\mathbf{X}$ is not a Wishart matrix and the exact distribution of $\mathbf{X}$ is intractable.
However, $\mathbf{X}$ may be accurately approximated as a single scaled Wishart matrix, $\mathbf{X}\sim \mathcal{W}_M(\eta,\lambda\mathbf{I}_M)$, where the parameters $\eta$ and $\lambda$ are chosen such that the first two moments of $\mathbf{X}$ and $\left[(1-\rho)q +\rho\frac{P}{N}\right]\mathbf{W}_1+\rho\frac{P}{N} \mathbf{W}_2$ are identical \cite{Secure_Zhu_1}, which yields
\begin{align}
\label{term1}
\eta\lambda&=(N-K)\left[(1-\rho)q +\rho\frac{P}{N}\right]+K\rho\frac{P}{N}
\end{align}
and
\begin{align}
\label{term2}
\eta\lambda^2=(N-K)\left[(1-\rho)q +\rho\frac{P}{N}\right]^2+K\left(\rho\frac{P}{N}\right)^2.
\end{align}
Substituting \eqref{q} into \eqref{term1} and \eqref{term2}, $\eta$ and $\lambda$ are obtained as
\begin{align}
\eta=N\frac{[(1-\rho)(1-\phi)+\rho]^2}{[(1-\rho)(1-\phi)+\rho]^2+(1-\rho)^2(1-\phi)^2\frac{K}{N-K}}
\label{eta}
\end{align}
and
\begin{align}
\lambda= \frac{P}{N}\frac{[(1-\rho)(1-\phi)+\rho]^2+(1-\rho)^2(1-\phi)^2\frac{K}{N-K}}{(1-\rho)(1-\phi)+\rho},
\label{lambda}
\end{align}
respectively.

Next, we apply Jensen's inequality which yields an upper bound for the eavesdropper's capacity:
\begin{align}
C_k
&\leq \log_2\left[1+ (1-\rho)p~\mathbb{E}\left\{\mathbf{w}_k^H  \mathbf{H}_{\mathrm{e}}^H \mathbf{X}^{-1} \mathbf{H}_{\mathrm{e}} \mathbf{w}_k \right \}\right]
\nonumber
\\
&=\log_2\left[1+ \frac{(1-\rho)p}{\lambda(\eta-M)}~\mathbb{E}\left\{\mathbf{w}_k^H  \mathbf{H}_{\mathrm{e}}^H \mathbf{H}_{\mathrm{e}}  \mathbf{w}_k \right \}\right]
\label{Wishsrt_inv}
\\
&=\log_2\left[1+ \frac{(1-\rho)pM}{\lambda(\eta-M)}~\mathbb{E}\left\{\mathbf{w}_k^H \mathbf{w}_k \right \}\right]
\label{Central}
\\
&=\log_2\left[1+ \frac{(1-\rho)p M}{\lambda(\eta-M)} \right]
\label{C_k}
,
\end{align}
where \eqref{Wishsrt_inv} utilizes the property that $\mathbf{A}^{-1} \xrightarrow{a.s.} \frac{1}{n-m} \mathbf{I}_m$ for a Wishart matrix $\mathbf{A}\sim\mathcal{W}_m(n,\mathbf{I}_m)$ with $n>m$ \cite{DAC1},
\eqref{Central} uses the fact that $\frac{1}{M}\mathbf{H}_{\mathrm{e}}^H \mathbf{H}_{\mathrm{e}}-\mathbf{I}_N \xrightarrow{a.s.} \mathbf{0}_N$ due to the Central Limit Theorem,
and \eqref{C_k} applies the weak law of large numbers and $\mathbb{E}\{\mathbf{w}_k^H \mathbf{w}_k\}=\frac{1}{K}\sum\limits_{k=1}^K\mathbf{w}_k^H\mathbf{w}_k=\frac{1}{K}\mathrm{tr}\{\mathbf{W}^H\mathbf{W}\}=1$.
Note that the derivation in \eqref{Wishsrt_inv} only holds for an invertible $\mathbf{X}\sim\mathcal{W}_M(\eta,\lambda\mathbf{I}_M)$, which yields $\eta-M>0$.
By substituting \eqref{alpha}, \eqref{beta}, and \eqref{eta}, we have
\begin{align}
&\eta-M
\nonumber
\\
=&N\frac{(1\!-\!\alpha)(1\!-\!\beta)[(1-\rho)(1-\phi)+\rho]^2\!-\!\alpha\beta(\!-\!\rho)^2(1\!-\!\phi)^2}{(1-\beta)[(1-\rho)(1-\phi)+\rho]^2+\beta(1-\rho)^2(1-\phi)^2}
\nonumber
\\
=&N\frac{(1-\alpha)(1-\beta)(1-\rho)^2(1-\phi)^2}{(1-\beta)[(1-\rho)(1-\phi)+\rho]^2+\beta(1-\rho)^2(1-\phi)^2}
\left[ \left(\!1\!+\!\frac{\rho}{(1\!-\!\rho)(1\!-\!\phi)}\right)^2\!-\!\frac{\alpha\beta}{(1\!-\!\alpha)(1\!-\!\beta)}\right]
\nonumber
\\
>&0
.
\label{eta2}
\end{align}
Regardless of the values of $\rho\in(0,1)$ and $\phi\in(0,1]$, \eqref{eta2} holds if $\frac{\alpha\beta}{(1-\alpha)(1-\beta)}<1$, which yields $\alpha+\beta<1$ with $\beta\in(0,1)$ and $\alpha\in(0,1)$.
Fortunately, this is a common condition for massive MIMO systems with large $N$.

Finally by substituting \eqref{p}, \eqref{eta}, and \eqref{lambda} into \eqref{C_k}, the upper bound in \eqref{Ck_bound} is directly obtained.

\end{appendices}

\end{document}